\begin{document}

%\markboth{IEEE Journal on Selected Areas in Communications,~Vol.~31, No.~2, February~2013}{YIN \MakeLowercase{\textit{et al.}}: Dealing with Interference in Distributed Large-scale MIMO Systems: A Statistical Approach}

\def\QEDclosed{\mbox{\rule[0pt]{1.3ex}{1.3ex}}}

\def\QEDopen{{\setlength{\fboxsep}{0pt}\setlength{\fboxrule}{0.2pt}\fbox{\rule[0pt]{0pt}{1.3ex}\rule[0pt]{1.3ex}{0pt}}}}
\def\QED{\QEDopen}

\def\proof{}
\def\endproof{\hspace*{\fill}~\QED\par\endtrivlist\unskip}

%%iningf
\title{Dealing with Interference in Distributed Large-scale MIMO Systems: A Statistical Approach}

\author{Haifan~Yin,
    David~Gesbert~\IEEEmembership{Fellow,~IEEE},
    Laura~Cottatellucci%
\thanks{This work was supported by the Seventh Framework Programme for Research of the European Commission under grant number HARP-318489.}%
%\thanks{H. Yin, Y. Liu are with Huazhong University of Science and Technology, Wuhan, 430074, China (e-mail: yin@eurecom.fr, liuyz@mail.hust.edu.cn).}%
\thanks{Copyright (c) 2013 IEEE. Personal use of this material is permitted. However, permission to use this material for any other purposes must be obtained from the IEEE by sending a request to pubs-permissions@ieee.org.}%
\thanks{H. Yin, D. Gesbert, and L. Cottatellucci are with EURECOM, 06410 Biot, France (e-mail: yin@eurecom.fr, gesbert@eurecom.fr, cottatel@eurecom.fr).}%
%\thanks{Y. Liu is with Huazhong University of Science and Technology, Wuhan, 430074 China (e-mail: liuyz@mail.hust.edu.cn)}%
%\thanks{Digital Object Identifier 10.1109/JSAC.2013.1302xx.}%
}

% The paper headers

%\markboth{IEEE JSAC,~Vol.~6, No.~1, January~2013}{Shell \MakeLowercase{\textit{et al.}}: A Coordinated Approach to Channel Estimation in Massive MIMO Systems}
%\markboth{IEEE Journal on Selected Areas in Communications,~Vol.~6, No.~1, January~2013}{Shell \MakeLowercase{\textit{et al.}}: A Coordinated Approach to Channel Estimation in Massive MIMO Systems}

%\markboth{Submitted to IEEE Journal on Selected Areas in Communications}{Shell \MakeLowercase{\textit{et al.}}: A Coordinated Approach to Channel Estimation in Massive MIMO Systems}

%\markright{{Shell \MakeLowercase{\textit{et al.}}: A Coordinated Approach to Channel Estimation in Massive MIMO Systems}}

%\IEEEpubid{0000--0000/00\$27.00~\copyright~2013 IEEE}
\IEEEpubid{0000--0000/00/\$31.00~\copyright~2013 IEEE}

\maketitle

\theoremstyle{plain}
\newtheorem{theorem}{Theorem}
\theoremstyle{plain}
\newtheorem{proposition}{Proposition}
\theoremstyle{plain}
\newtheorem{corollary}{Corollary}
%Corollary

\begin{abstract}
\boldmath
This paper considers the problem of interference control through the use of second-order statistics in massive MIMO multi-cell networks. We consider both the cases of co-located massive arrays and large-scale distributed antenna settings. We are interested in characterizing the low-rankness of users' channel covariance matrices, as such a property can be exploited towards improved channel estimation (so-called pilot decontamination) as well as interference rejection via spatial filtering. In previous work, it was shown that massive MIMO channel covariance matrices exhibit a useful finite-rank property that can be modeled via the angular spread of multipath at a MIMO uniform linear array. This paper extends this result to more general settings including certain non-uniform arrays, and more surprisingly, to two dimensional distributed large scale arrays. In particular our model exhibits the dependence of the signal subspace's richness on the scattering radius around the user terminal, through a closed form expression. The applications of the low-rankness covariance property to channel estimation's denoising and low-complexity interference filtering are highlighted.

%This paper considers the problem of interference control in multi-cell interference-limited cellular networks. We address systems employing multiple antennas combining when the number of these antennas is allowed to grow large (massive regime). We consider both the cases of co-located massive arrays and distributed antenna settings. We are interested in both the problems of channel estimation with pilot reuse and spatial filter design and aim at improving interference rejection performance by exploiting second-order statistical properties. In previous work, it was demonstrated that massive MIMO channel covariance matrices exhibit a useful finite rank property that can be modeled via the angular spread of multipath at a MIMO uniform linear array. We show here that the property extends to certain non-uniform linear arrays, and more surprisingly, to two dimensional distributed large scale arrays, although the rank model is different in the distributed setting. In particular our model exhibits the dependence of the signal subspace's richness on the scattering radius around the user terminal. The results suggest simple schemes for channel estimation's denoising and low complexity interference filtering which find an application in multi-cell cooperative networks.

\end{abstract}

\begin{IEEEkeywords}
massive MIMO, distributed antennas, channel estimation, interference mitigation, covariance matrix.
\end{IEEEkeywords}

\IEEEpeerreviewmaketitle

\section{Introduction}

\IEEEPARstart{F}{ull} spatial reuse of the frequency resource across even neighboring cells is a {\em de facto} standard approach in wireless network design. The downside of this strategy lies in the high amount of inter-cell interference, which in turn severely limits the performance of certain users, especially at cell-edge.
This fact has fueled extensive research on interference management, and particularly on methods relying on the use of spatial filtering at the base station side.
Recently, two schools of thought have emerged with conflicting strategies for how to best exploit the added spatial dimension offered by multiple-input multiple-output (MIMO) antennas.
In the first, the focus is on strengthening local beamforming capabilities by endowing each base station with a massive number of antenna elements that is substantially larger than the number of terminals served in the same cell on any given spectral resource block.
The added cost of hardware is compensated by the fact that simple distributed beamforming schemes that require little inter-cell cooperation can efficiently mitigate interference \cite{marzetta2010, rusek2013, hoydis2013, larsson2013}. In the second school, cooperation between cells is emphasized as the key towards increasing the spatial degrees of freedom \cite{gesbert2010JSAC}. In the cooperation approach, so-called network MIMO (or CoMP in the 3GPP terminology) schemes mimic the transmission over a virtual MIMO array encompassing the {\em spatially distributed} base station antennas. In contrast with the massive MIMO solution, the cooperative spatial filtering of interference is made possible with no {\em additional} antennas at the base station side, yet it goes at the expense of fast signaling links over the backhaul, a need for tight synchronization, and seemingly multi-user detection schemes that are computationally more demanding than the simple matched filters advocated in massive MIMO.
Additionally, a major hurdle preventing from realizing the full gains of MIMO multi-cell cooperation lies in the cost of acquiring and sharing channel estimates using orthogonal training sequences over large clusters \cite{Lozano2013}.

\IEEEpubidadjcol

Despite these differences, a fundamental common feature behind each philosophy lies in the {\em coherent combining} of a large number of antennas in view of interference nulling. Additionally, in both cases, our ability to reject interference is only as good as our ability to estimate the user channels properly. In the context of co-located massive MIMO, channel estimation from pilots that are inevitably reused over space leads to the so-called pilot contamination effect \cite{jose2011TWC, ngo2011ICASSP}.
Although initially branded as a fundamental limit of massive MIMO communications, a finer impact analysis of pilot contamination indicates that it is only one of several limitations of such systems \cite{bjornson2013}. When it comes to improving channel estimation, several possible solutions were recently proposed in a series of papers \cite{yin2012jsac, muller2013arXiv, ngo2013multicell}. In \cite{yin2012jsac}, an approach to de-interfere channel estimates was revealed based on the exploitation of second-order statistical properties of the received vector signal. The key enabler is the finite-rankness of the channels' covariance matrices which was shown to occur {\em in the asymptotic massive MIMO regime} whenever the angle spread of incoming/departing paths at the MIMO array is {\em limited}.
Independently, a similar finite-rank property was shown to be useful in the context of low-complexity scheduling and spatial beamforming for massive MIMO networks \cite{caire2012joint}.
Hence the low-dimensional property for the signal subspace (i.e. in which the MIMO channel realizations live) is instrumental to spatial interference rejection. These results were all reached for the case of uniform (equi-spaced) calibrated linear arrays. A natural question then arises as to whether the low-rank property can be established and exploited in more general large-scale antenna settings, such as random and two dimensional antenna placements.
This paper is devoted to this problem.

A first examination of \cite{yin2012jsac,caire2012joint} indicates that the finite-rank behavior is rooted in the asymptotic orthogonality between Fourier transform vectors corresponding to different path angles, suggesting the property might be restricted to the use of one-dimensional equi-spaced arrays. However our results point otherwise, showing low-rankness of channel's subspace for large-scale antenna systems is a recurrent trend applying to random and also distributed antenna placements, hence with a wider applicability to cooperative networks.

Our specific contributions are as follows:
First we consider a uniform linear massive array scenario yet with {\em several clusters} of multipath. In this case we establish a finite-rank model for the channel's covariance that directly extends that of \cite{yin2012jsac}, where the rank is shown to be a function of the incoming/departing angular spread of multipath. We then show that a similar low-rank result holds for a linear array with random placement of antenna elements. Although in this case, unlike the uniform array, the finite rank is only characterized by an upper bound. We show how this property can be used towards, for instance, pilot decontamination.

%First we consider a linear massive array scenario yet with a random placement of antenna elements. In this case we establish a finite-rank model for the channel's covariance that directly extends that of \cite{yin2012jsac}, where the rank is shown to be a function of the incoming/departing angular spread of multipath. We show that a similar result holds for the more practical case of {\em several clusters} of multipath as well. We show how this property can be used towards, for instance, pilot decontamination.

In the second part of the paper, we turn to a large-scale antenna regime where the antenna elements are scattered randomly throughout the (dense) network, yet can still be combined coherently. Such a setting with spatially distributed antennas includes remote radio head (RRH) networks, network-MIMO (CoMP) schemes with large clusters, and cloud-enabled radio access networks (C-RAN) as particular cases. A channel model building on the classical one-ring multipath model \cite{jakes1974, Shiu2000} is proposed to analyze this scenario.
In this setting we show that, there again surprisingly, the channel covariance exhibits a low-dimensional signal subspace behavior, in the large number of base station antenna regime, {\em even discounting path loss effects}.
We show the richness of the covariance's signal subspace is primarily governed by the scattering radius around the user terminal. We provide a closed form expression for an upper-bound of the covariance rank and show by simulation how this bound closely matches reality. Note that the notion that the total perimeter occupied by scatterers can govern the rank of the signal subspace in a distributed MIMO antenna setting is reminiscent of a previously observed phenomenon in the different context of compact MIMO arrays. In \cite{kennedy2007}, the authors establish a physical model for the dimension of the spatial multipath field of a disk-shaped compact area filled with MIMO antennas and illuminated by isotropic multipaths.

%Note that this result is in many senses related to the results of \cite{kennedy2007}, in which a low-dimensional signal space behavior is observed for a different scenario where the receive antennas are located in a compact disk.

In the last part of the paper, we turn our attention to the exploitation of signal-subspace's low-rankness towards interference rejection for a distributed array.
We derive a lower bound on the signal to interference ratio that would be obtained in a two user setting with a simple matched filter, as a function of the distance between the users and the number of antennas. We show how a distance of two scattering radiuses can be selected as a critical minimal distance between selected co-channel users in a scheduling algorithm so as to facilitate interference nulling. As an application of the low-rankness property, a simple subspace-based interference mitigation scheme is put forward, which exploits the statistical information of the interference channels. Numerical results %comforting our claims
are presented in the last section.

The notations adopted in the paper are as follows. We use boldface to denote matrices and vectors. Specifically, ${\mathbf{I}}_M$ denotes the $M \times M$ identity matrix. ${({\mathbf{X}})^T}$, ${({\mathbf{X}})^*}$, and ${({\mathbf{X}})^H}$ denote the transpose, conjugate, and conjugate transpose of a matrix ${\mathbf{X}}$ respectively. $\mathbb{E}\left\{ \cdot \right\}$ denotes the expectation, ${\left\| \cdot \right\|_F}$ denotes the Frobenius norm.
The Kronecker product of two matrices ${\bf{X}}$ and ${\bf{Y}}$ is denoted by ${\bf{X}}\otimes{\bf{Y}}$. $\mbox{span} \{{\mathbf{v}_1, \mathbf{v}_2, ..., \mathbf{v}_n}\}$ is the span of linear vector space on the basis of $\mathbf{v}_1, \mathbf{v}_2, ..., \mathbf{v}_n$ for some $n\geq 1$, $\mbox{dim} \{\mathcal{A}\}$ is the dimension of a linear space $\mathcal{A}$, and $\mbox{null} \{\mathbf{R}\}$ is the null space of matrix $\mathbf{R}$. ${\mathop{\rm diag}\nolimits} \{ {\bf{a_1,...,a_N}}\}$ denotes a diagonal matrix or a block diagonal matrix with $\bf{a_1,...,a_N}$ at the main diagonal. $\triangleq$ is used for definition.

%\section{Signal and Channel Models}\label{modeling}
\section{Co-located massive linear arrays}\label{modeling}
We consider the uplink\footnote{Similar principles would apply in the downlink, which for ease of exposition is ignored here.} of a network of $B$ time-synchronized cells, with full spectrum reuse. Each of the $B$ base stations is equipped with a one-dimensional array of $M$ antennas, where $M$ is allowed to grow large (massive MIMO regime). For ease of exposition, all user terminals are assumed to be equipped with a single antenna. Furthermore we consider that a single user is served per cell and per resource block. A classical multipath model is given by \cite{molisch2010}:
\begin{equation}\label{Eq:simpleChanModel}
{{\mathbf{h}}_{i}} = \sqrt{ \frac{\beta_i}{ P }} \sum\limits_{p = 1}^P {{\mathbf{a}}({\theta _{ip}}){e^{j\varphi_{ip}}}},
\end{equation}
where $P$ is the arbitrary number of i.i.d. paths, $\beta_i$ denotes the path loss for channel $\mathbf{h}_i$, and $e^{j\varphi_{ip}}$ is the i.i.d. random phase, which is independent over channel index $i$ and path index $p$. ${\mathbf{a}}({\theta})$ is the signature (or phase response) vector by the array to a path originating from the angle $\theta $. Note that in the case of an equi-spaced array, ${\mathbf{a}}(\theta)$ has a Fourier structure.

\subsection{Channel Estimation}

When it comes to channel estimation it is assumed that orthogonal pilots are used by users located in the same cell, so that intra-cell pilot interference can be neglected. Sets of pilot sequences are however assumed to be fully reused from cell to cell, causing maximum inter-cell pilot interference.
The pilot sequence is denoted by:
 \begin{equation}\label{Eq:pilots}
    {\mathbf{s}} = {[\begin{array}{*{20}{c}}
  {{s_{1}}}&{{s_{2}}}& \cdots &{{s_{\tau }}}
\end{array}]^T}.
 \end{equation}
%The pilot symbols are normalized such that $|{s_{l1}}{|^2} =  \cdots  = |{s_{l\tau }}{|^2} = 1$ for every cell index $l$.
The power of the pilot sequence is assumed to be $\mathbf{s}^H \mathbf{s} = \tau$.
The channel vector between the $b$-th cell user and the target base station is ${\mathbf{h}}_{b}$. Without loss of generality, we assume the 1st cell is the target cell. Thus, ${\mathbf{h}}_{1}$ is the desired channel while ${\mathbf{h}}_{b}$, $b>1$ are interference channels.
During the pilot phase, the signal received at the target base station is
\begin{equation}\label{Eq:train}
{{\mathbf{Y}}} = \sum\limits_{b = 1}^B {{{\mathbf{h}}_{b}}{\mathbf{s}}^T + {{\mathbf{N}}}},
\end{equation}
where ${{\mathbf{N}}} \in {\mathbb{C}^{M \times \tau }}$ is the spatially and temporally white additive  Gaussian noise (AWGN) with zero-mean and element-wise variance ${\sigma _n^2}$.
Assuming the desired and interference covariance matrices ${{\mathbf{R}}_{b}} \triangleq \mathbb{E}\{ {{{\mathbf{h}}_{b}}{\mathbf{h}}_{b}^H} \}$ can be estimated in a preamble, the
Bayesian (or equivalently MMSE) estimate of the target channel vector is given by \cite{bjornson2010, hoydis2013, yin2012jsac}:
\begin{align}
{{\mathbf{\widehat h}}_{1}} = {{\mathbf{R}}_{1}}{\left( {\sigma _n^2{{\mathbf{I}}_M} + \tau \sum\limits_{b = 1}^B {{{\mathbf{R}}_{b}}} } \right)^{ - 1}}{{\mathbf{\bar S}}^H}{\mathbf{y}} \label{Eq:EstimatorDesired},
\end{align}
where the training matrix ${\mathbf{\bar S}} \triangleq {\mathbf{s}} \otimes {{\mathbf{I}}_M}$ and ${\mathbf{y}} \triangleq \mbox{vec}(\mathbf{Y})$.
An interesting question is under which conditions $ {{\mathbf{\widehat h}}_1} \rightarrow {\mathbf{\widehat h}}_1^{{\text{no int}}}$ in the massive MIMO regime ($M \gg 1$),
where the superscript \emph{no int} refers to the ``no interference case." This question was previously addressed in \cite{yin2012jsac}, revealing the following sufficient condition for achieving total interference suppression in the large $M$ regime:
\begin{equation}
\mathop  \cup \limits_{b = 2}^B  \mbox{span}  \left\{ {\mathbf{R}}_{b}\right\} \subset \mbox{null}  \left\{ {\mathbf{R}}_{1}\right\} \label{condition}
\end{equation}
where the above condition requires the target channel covariance to exhibit a non-empty null space (aka low-dimensional subspace) {\em and} for all other interference covariances' signal subspaces to fall within this null space (see the proof in \cite{yin2012jsac}). In practice, the inclusion condition in (\ref{condition}) can be realized by a user grouping algorithm \cite{yin2012jsac, caire2012joint}, as long as the rank of each covariance is small enough in relation to $M$.

\subsection{Low-rank Properties of General Linear Arrays}\label{Sec:LowDim_Centralized}

In \cite{yin2012jsac} \cite{caire2012joint}, a linear equi-spaced array was considered. The propagation model also assumed that multipaths impinge on the base station array with angles of arrival (AOA) spanning an interval
$[\theta^{\min }, \theta^{\max }] \in [0,\pi]$ \footnote{Note that a path coming from angle $-\theta$ yields identical steering vector to that from $\theta$. Therefore we can limit ourselves to AOAs within $[0,\pi]$.}. It is then shown that condition (\ref{condition}) is satisfied provided AOAs corresponding to interfering users fall {\em outside} $[\theta^{\min }, \theta^{\max }]$. The assumptions of a single cluster of multipath and of a calibrated equi-spaced array are however restrictive. Below, we generalize this result to more realistic settings.

\subsubsection{Multiple Clusters}\label{Sec:MultiClusterAOA}

We now consider a general multipath model when the AOAs corresponding to the desired channel are still bounded, but come from several disjoint clusters \cite{molisch2010}. The steering vector in (\ref{Eq:simpleChanModel}) is \cite{tsai2002}
\begin{equation}\label{Eq:steeVec}
{\mathbf{a}}({\theta }) \triangleq \left[ {\begin{array}{*{20}{c}}
  1 \\
  {{e^{ - j2\pi \frac{D}{\lambda }\cos ({\theta})}}} \\
   \vdots  \\
  {{e^{ - j2\pi \frac{{(M - 1)D}}{\lambda }\cos ({\theta})}}}
\end{array}} \right],
\end{equation}
where $D$ is the antenna spacing and $\lambda$ is the signal wavelength.
Let $Q$ denote the number of clusters. Let $[\theta _q^{\text{min}}, \theta _q^{\text{max}}]$ denote the interval of AOAs for the $q$-th cluster of desired paths in the $[0,\pi]$ interval. See an illustration in Fig. \ref{fig:InterleavedAOA} for $Q=2$.
\begin{figure}[h]
  \centering
  \includegraphics[width=2.5in]{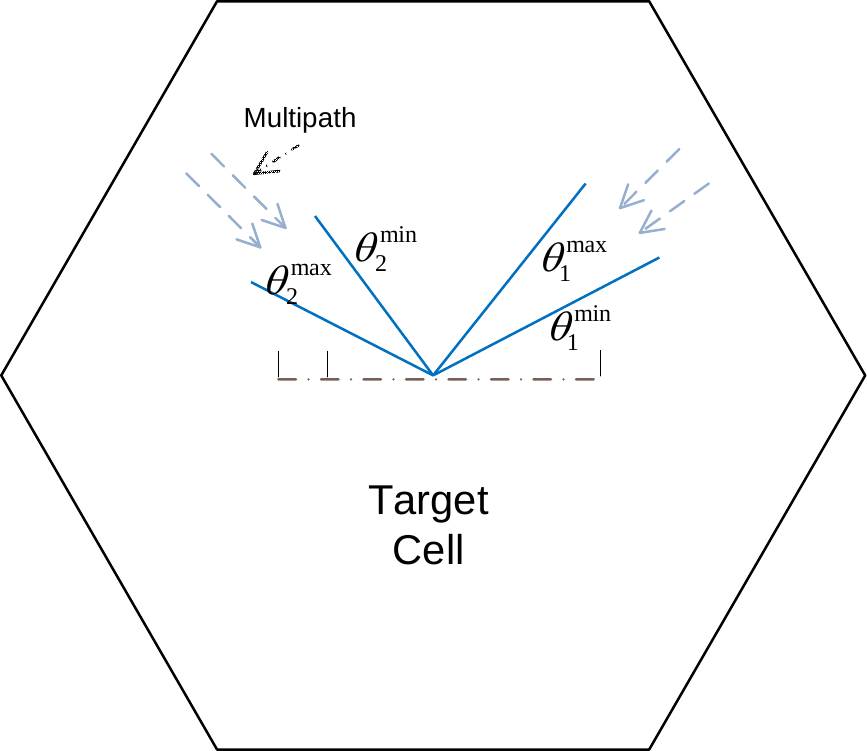}\\
  \caption{Desired channel composed of $Q=2$ clusters of multipath.}  \label{fig:InterleavedAOA}
\end{figure}

For a uniform linear array, we have the following proposition in the massive MIMO regime:
\begin{proposition}\label{propRank_Interleaved}
The rank of channel covariance matrix ${{\mathbf{R}}}$ satisfies:
\begin{equation*}
%\frac{\operatorname{rank} ({{\mathbf{R}}})} {M} \leqslant {d}, \text{ as }M \rightarrow \infty,
\frac{\operatorname{rank} ({{\mathbf{R}}})} {M} \leqslant {d}, \text{ when }M \text{ is sufficiently large },
\end{equation*}
where ${d}$ is defined as
\begin{equation*}
{d} \triangleq  \min (1,\sum\limits_{q = 1}^Q {\left( {\cos (\theta _q^{{\text{min}}}) - \cos (\theta _q^{{\text{max}}})} \right)} \frac{D}{\lambda }).
\end{equation*}
%in which $\lambda$ is the wavelength.
\end{proposition}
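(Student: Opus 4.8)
The plan is to reduce the multi-cluster statement to the single-cluster low-rank result already established in \cite{yin2012jsac}, and then glue the clusters together using subadditivity of the rank. First I would write the covariance explicitly. Averaging the outer product $\mathbf{h}\mathbf{h}^H$ of the channel model (\ref{Eq:simpleChanModel}) over the i.i.d. random phases $\varphi_{p}$ annihilates every cross term, so $\mathbf{R}=\mathbb{E}\{\mathbf{h}\mathbf{h}^H\}$ collapses to an expectation of rank-one steering outer products $\mathbf{a}(\theta)\mathbf{a}(\theta)^H$ taken over the angular distribution, whose support is the disjoint union $\bigcup_{q=1}^{Q}[\theta_q^{\min},\theta_q^{\max}]$. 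Because that support is a disjoint union, linearity of the expectation splits the covariance additively,
\begin{equation*}
\mathbf{R}=\sum_{q=1}^{Q}\mathbf{R}_q,
\end{equation*}
where $\mathbf{R}_q$ collects the contribution carried by the $q$-th cluster, i.e.\ the expectation restricted to $\theta\in[\theta_q^{\min},\theta_q^{\max}]$.

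Next I would treat each $\mathbf{R}_q$ as a self-contained single-cluster covariance. Each $\mathbf{R}_q$ is Hermitian and positive semidefinite, and through the change of variable $x=\cos\theta$ the steering vector (\ref{Eq:steeVec}) acquires a pure Vandermonde/Fourier structure in $x$; hence $\mathbf{R}_q$ is precisely the single-interval object analyzed in \cite{yin2012jsac}. I would therefore invoke that prior result as a black box: for $M$ large enough,
\begin{equation*}
\frac{\operatorname{rank}(\mathbf{R}_q)}{M}\leq \bigl(\cos(\theta_q^{\min})-\cos(\theta_q^{\max})\bigr)\frac{D}{\lambda}.
\end{equation*}
The right-hand side is nonnegative because $\cos$ is decreasing on $[0,\pi]$, so each summand defining $d$ is a legitimate per-cluster rank fraction.

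Finally I would assemble the pieces. The rank is subadditive, so $\operatorname{rank}(\mathbf{R})=\operatorname{rank}\bigl(\sum_q\mathbf{R}_q\bigr)\leq\sum_q\operatorname{rank}(\mathbf{R}_q)$. Summing the per-cluster bounds gives $\operatorname{rank}(\mathbf{R})/M\leq\sum_{q=1}^{Q}\bigl(\cos(\theta_q^{\min})-\cos(\theta_q^{\max})\bigr)D/\lambda$; combining this with the trivial bound $\operatorname{rank}(\mathbf{R})\leq M$ yields $\operatorname{rank}(\mathbf{R})/M\leq\min(1,\cdot)=d$, which is exactly the claim.

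The main obstacle is keeping the ``sufficiently large $M$'' qualifier coherent under the reduction. The single-cluster bound for cluster $q$ holds only for $M\geq M_q$ with a cluster-dependent threshold, so the combined statement is valid only for $M\geq\max_q M_q$, which is harmless for fixed finite $Q$. A more delicate point to watch is double counting: subadditivity always delivers the upper bound, but to be sure it is not grossly loose one should check that the disjointness of the clusters forces the per-cluster steering subspaces $\operatorname{span}\{\mathbf{a}(\theta):\theta\in[\theta_q^{\min},\theta_q^{\max}]\}$ to become asymptotically near-orthogonal as $M$ grows, a consequence of the Dirichlet-kernel decay of $\mathbf{a}(\theta)^H\mathbf{a}(\theta')/M$ for separated angles, so that the ranks genuinely add rather than overlap. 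For the stated upper bound, however, subadditivity alone suffices and this orthogonality check is required only to argue tightness.
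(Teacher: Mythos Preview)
Your proposal is correct and follows essentially the same route as the paper: decompose $\mathbf{R}$ into per-cluster covariances, invoke the single-cluster rank bound from \cite{yin2012jsac} (their Lemma~1) on each piece, and combine. The only cosmetic difference is that the paper combines by asserting asymptotic orthogonality of the per-cluster signal subspaces so that the dimensions add exactly, whereas you use rank subadditivity; since the statement is an upper bound, your subadditivity argument is already sufficient and your closing remark correctly identifies orthogonality as relevant only for tightness.
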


\begin{proof}
\quad \emph{Proof:} The channel can be seen as the sum of elementary channels each of which corresponds to one separate clusters. Then ${\mathbf{R}}$ can be decomposed into a sum of covariances over these clusters. Since the clusters are separated, the signal subspaces of the corresponding covariances are orthogonal and therefore their dimensions add up. Then based on \cite{yin2012jsac} Lemma 1, the proof of Proposition \ref{propRank_Interleaved} can be readily obtained.
\end{proof}

Now define the total set of AOAs of the desired channel as
\begin{equation}\label{Eq:AOASet}
\overline{\theta}_d \triangleq \cup_{q=1}^Q [\theta _q^{\text{min}}, \theta _q^{\text{max}}],
\end{equation}
so that the probability density function (PDF) $p_d(\theta)$ of the desired AOA satisfies $p_d(\theta)>0$ if $\theta \in \overline{\theta}_d$ and $p_d(\theta)=0$ if $\theta \notin \overline{\theta}_d$. In the same way, the PDF of all interference AOAs satisfies $p_i(\theta)>0$ if $\theta \in \overline{\theta}_i$ and $p_i(\theta)=0$ otherwise, where $\overline{\theta}_i$ is the union of all possible interference AOAs. We have the following result for the massive uniform array:
%Corollary
\begin{corollary}\label{theoremInterleavedAOA}
if $D \leq \lambda/2 $ and $\overline{\theta}_d \cap \overline{\theta}_i = \emptyset$, then the MMSE estimate of (\ref{Eq:EstimatorDesired}) satisfies:
\begin{equation}\label{Eq:EstAsy}
%{\mathcal{M}_1} \to \mathcal{M}_1^{{\text{opt}}}
\mathop {\lim }\limits_{M \to \infty } {{\mathbf{\widehat h}}_1} = {\mathbf{\widehat h}}_1^{{\text{no int}}}.
\end{equation}
%where $\lambda$ is the wavelength.
\end{corollary}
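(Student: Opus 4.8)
The plan is to reduce the claim to verifying the subspace inclusion condition (\ref{condition}) in the large-$M$ limit, and then to invoke the argument of \cite{yin2012jsac} showing that (\ref{condition}) is sufficient for ${\mathbf{\widehat h}}_1 \to {\mathbf{\widehat h}}_1^{{\text{no int}}}$. Since the path-loss coefficients $\beta_b$ are strictly positive scalars and the random phases $e^{j\varphi_{bp}}$ decorrelate distinct paths in expectation, ${\mathbf R}_b = \beta_b\,\mathbb{E}_\theta\{{\mathbf a}(\theta){\mathbf a}(\theta)^H\}$ and hence $\mbox{span}\{{\mathbf R}_b\}$ equals the span of $\{{\mathbf a}(\theta):\theta\in\overline{\theta}_b\}$ (positivity of $p_b$ on its support ensures no direction is missed). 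So the task becomes to show that, as $M\to\infty$, $\mbox{span}\{{\mathbf a}(\theta):\theta\in\overline{\theta}_i\}$ becomes orthogonal to $\mbox{span}\{{\mathbf a}(\theta):\theta\in\overline{\theta}_d\}$.

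First I would perform the change of variable $x \triangleq \frac{D}{\lambda}\cos(\theta)$, which turns ${\mathbf a}(\theta)$ into a Fourier vector $[1,\,e^{-j2\pi x},\,\dots,\,e^{-j2\pi(M-1)x}]^T$. Two points are then crucial: (i) the hypothesis $D\le\lambda/2$ forces $x\in[-D/\lambda,\,D/\lambda]\subseteq[-\tfrac12,\tfrac12]$, an interval of length at most one, so distinct values of $x$ correspond to genuinely non-aliased Fourier vectors; and (ii) on $[0,\pi]$ the map $\theta\mapsto\cos(\theta)$ is strictly monotone, hence injective, so the disjointness $\overline{\theta}_d\cap\overline{\theta}_i=\emptyset$ translates into disjointness of the image supports $X_d,X_i\subseteq[-\tfrac12,\tfrac12]$. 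If $\overline{\theta}_d$ or $\overline{\theta}_i$ is a union of several clusters, $X_d,X_i$ are simply finite unions of disjoint intervals, which leaves the argument unchanged.

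Next I would invoke the asymptotic-orthogonality result underlying \cite{yin2012jsac} (its Lemma 1 and the surrounding analysis): for a uniform array, the subspaces spanned by Fourier vectors over two disjoint $x$-intervals become mutually orthogonal as $M\to\infty$, each with dimension growing like the interval length times $M$. Applying this to the disjoint pair $X_d,X_i$ shows that every interference steering vector lies asymptotically in $\mbox{null}\{{\mathbf R}_1\}$; summing over the finitely many interferers $b=2,\dots,B$ — using, exactly as in the proof of Proposition \ref{propRank_Interleaved}, that orthogonality is preserved under finite sums of subspaces — yields $\cup_{b=2}^B\mbox{span}\{{\mathbf R}_b\}\subset\mbox{null}\{{\mathbf R}_1\}$ in the limit, i.e. (\ref{condition}). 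Feeding this into the MMSE expression (\ref{Eq:EstimatorDesired}) and following the manipulation of \cite{yin2012jsac} (splitting $\tau\sum_b{\mathbf R}_b=\tau{\mathbf R}_1+\tau\sum_{b\ge2}{\mathbf R}_b$ and diagonalizing over the asymptotically orthogonal signal subspaces) shows the interference contribution to ${\mathbf{\widehat h}}_1$ vanishes, leaving ${\mathbf{\widehat h}}_1^{{\text{no int}}}$.

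The main obstacle I anticipate is the \emph{asymptotic} nature of the orthogonality: for any finite $M$ the null space of ${\mathbf R}_1$ need not literally contain the interference subspace, so one cannot simply pass to limiting subspaces in (\ref{condition}); instead the argument must track the vanishing of the cross-terms in (\ref{Eq:EstimatorDesired}) — and the fact that the relevant inverse remains well-conditioned on the desired subspace — uniformly, which is precisely what the quantitative Fourier estimates of \cite{yin2012jsac} deliver and what a naive spectral argument would not. A secondary, minor point is to make sure the change of variable $x=\frac{D}{\lambda}\cos(\theta)$ does not merge a positive-measure part of $\overline{\theta}_d$ with one of $\overline{\theta}_i$ at the aliasing boundary; since this can only happen at $\theta\in\{0,\pi\}$, which has measure zero and is in any case excluded by disjointness, it is harmless.
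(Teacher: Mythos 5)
Your proposal is correct and follows essentially the same route as the paper: establish the subspace condition (\ref{condition}) from the disjointness of the AOA supports via the asymptotic orthogonality of steering-vector subspaces for a ULA with $D\le\lambda/2$ (the paper invokes Lemma~2 of \cite{yin2012jsac}, you reconstruct it through the $x=\frac{D}{\lambda}\cos\theta$ change of variable and Lemma~1), and then conclude that the interference term in the MMSE estimator (\ref{Eq:EstimatorDesired}) vanishes exactly as in \cite{yin2012jsac}. Your explicit handling of the multi-cluster case and your caveat about the asymptotic (rather than finite-$M$) nature of the orthogonality are in fact more careful than the paper's one-line sketch, which simply defers those details to \cite{yin2012jsac}.
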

\begin{proof}
\quad \emph{Proof:} It can be shown that from \cite{yin2012jsac} Lemma 2, condition (\ref{condition}) will be fulfilled as long as interfering AOAs do not overlap with {\em any} of the clusters for the desired channel, in which case if we analyze the received signal using eigen-value decomposition, we can find the interference disappears asymptotically because of its orthogonality with the signal space of desired channel covariance. (\ref{Eq:EstAsy}) is obtained in the same way as \cite{yin2012jsac}. As a result we omit the detailed proof in this paper.
\end{proof}

%Corollary \ref{theoremInterleavedAOA} indicates that interference can be fully suppressed in the large uniform 1-D array regime even when desired and interference paths come in interleaved clusters, regardless of how many clusters exist, provided the clusters do not overlap.

\subsubsection{Random Arrays}

Tightly calibrated arrays with uniform spacing are hard to realize in practice. An interesting question is whether the above results carry on to the setting of linear arrays with random antenna placement. To study this case, we consider a set of antennas randomly located over a line, and spanning a total aperture of $\mathcal{D}$ meters. We investigate the extended array and $\mathcal{D}$ is allowed to grow with $M$.

%{\color{red}\uline{ David: would ne nice to have a small illustration of this antenna topology, also giving the key notations.}}
In this case, an elementary path coming from an angle $\theta$ can be represented via the corresponding array response vector as:
\begin{equation}\label{Eq:steeVecNonCal}
{\mathbf{a}}({\theta }) \triangleq \left[ {\begin{array}{*{20}{c}}
  {{e^{ - j2\pi \frac{d_1}{\lambda }\cos ({\theta})}}} \\
   \vdots  \\
  {{e^{ - j2\pi \frac{d_M}{\lambda }\cos ({\theta})}}}
\end{array}} \right],
\end{equation}
where the position of the $m$-th antenna\footnote{Note that antenna ordering has no impact on our results.}  ($1 \leq {m} \leq M$), $d_m$, follows a uniform distribution, i.e., $d_m \sim \mathcal{U} (0, \mathcal{D})$. The PDF of AOA $\theta$ for the desired paths is non-zero only when $\theta \in \overline{\theta}_d$, as in section \ref{Sec:MultiClusterAOA}.
Define the average antenna spacing $\overline{D} \triangleq \mathcal{D}/M$. Assuming the aperture of antenna array $\mathcal{D}$ is increasing linearly with $M$, i.e., $\overline{D}$ is constant, we now have the extended results on the low-dimensional property:
\begin{proposition}\label{propDim_nonCali}
Define
\begin{align*}
\boldsymbol{\alpha }(x) & \triangleq \left[{{e^{ - j2\pi \frac{d_1}{\lambda }x}}}, \cdots,  {{e^{ - j2\pi \frac{d_M}{\lambda }x}}}\right]^T \\
\mathcal{B} & \triangleq \operatorname{span} \{ \boldsymbol{\alpha }(x), x \in [{b_1},{b_2}]\} \\
\mathcal{C} & \triangleq \operatorname{span} \{ {\boldsymbol{\alpha }}(x), x \in \overline{b}\},
\end{align*}
where  ${b_1},{b_2} \in [ - 1,1]$, $\overline{b} \triangleq \cup_{q=1}^Q [b _q^{\text{min}}, b _q^{\text{max}}]$, and $b _q^{\text{min}}$, $b _q^{\text{max}}$ are values such that
$$ - 1 \leqslant b_1^{{\text{min}}} < b_1^{{\text{max}}} <  \cdots  < b_q^{{\text{min}}} < b_q^{{\text{max}}} <  \cdots  < b_Q^{{\text{min}}} < b_Q^{{\text{max}}} \leqslant 1$$
then we have
\begin{itemize}
  \item $\dim \{ \mathcal{B}\}  \leq {({b_2} - {b_1})M \overline{D}/\lambda} + o(M)$
  %\item $\dim \{ \mathcal{B}\}  \leq {({b_2} - {b_1})M \overline{D}/\lambda} + o(1)$
  \item $\dim \{ \mathcal{C}\}  \leq  \sum\nolimits_{q = 1}^Q {\left( {b _q^{{\text{max}}} - b _q^{{\text{min}}}} \right)M \overline{D}/\lambda} + o(M)$
 %\item $\dim \{ \mathcal{C}\}  \leq  \sum\nolimits_{q = 1}^Q {\left( {b _q^{{\text{max}}} - b _q^{{\text{min}}}} \right)M \overline{D}/\lambda} + o(1)$
\end{itemize}
\end{proposition}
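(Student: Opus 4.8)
The plan is to prove the single‑interval bound (first bullet) on $\dim\{\mathcal{B}\}$ and deduce the second from it. For the deduction, note $\mathcal{C}\subseteq\sum_{q=1}^{Q}\operatorname{span}\{\boldsymbol{\alpha}(x):x\in[b_q^{\min},b_q^{\max}]\}$ and that the dimension of a sum of subspaces is at most the sum of the dimensions; applying the first bullet to each of the $Q$ (finitely many) intervals and adding gives $\dim\{\mathcal{C}\}\le\sum_q(b_q^{\max}-b_q^{\min})M\overline{D}/\lambda+Q\cdot o(M)$, and $Q\cdot o(M)=o(M)$. It is also convenient to observe that $\boldsymbol{\alpha}(x)=\operatorname{diag}(\boldsymbol{\alpha}(b_1))\,\boldsymbol{\alpha}'(x-b_1)$, where $\boldsymbol{\alpha}'$ has exactly the form of $\boldsymbol{\alpha}$ and $\operatorname{diag}(\boldsymbol{\alpha}(b_1))$ is unitary; hence $\dim\{\mathcal{B}\}$ is unchanged if we translate the interval to $[0,W]$ with $W\triangleq b_2-b_1\in(0,2]$.

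The main idea for the single‑interval bound is to reduce to the equi‑spaced result of \cite{yin2012jsac} by quantising the aperture. Fix a small $\delta>0$ with $\delta<\lambda/2$ and $\mathcal{D}/\delta\triangleq N\in\mathbb{N}$, and round each antenna coordinate to the grid, $\tilde{d}_m\triangleq\delta\cdot\mathrm{round}(d_m/\delta)\in\{0,\delta,\dots,N\delta\}$. Let $\tilde{\boldsymbol{\alpha}}(x)$ be the steering vector built from the $\tilde{d}_m$. Then $\tilde{\boldsymbol{\alpha}}(x)=\mathbf{B}\,\mathbf{g}(x)$, where $\mathbf{g}(x)\in\mathbb{C}^{N+1}$ has $k$-th entry $e^{-j2\pi k\delta x/\lambda}$ — i.e. $\mathbf{g}(x)$ is the steering vector of an equi‑spaced array of $N+1$ elements with spacing $\delta$ — and $\mathbf{B}\in\{0,1\}^{M\times(N+1)}$ is the fixed ``binning'' matrix with $(\mathbf{B})_{m,k}=1$ iff $\tilde{d}_m=k\delta$. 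Hence $\operatorname{span}\{\tilde{\boldsymbol{\alpha}}(x):x\in[0,W]\}\subseteq\mathbf{B}\cdot\operatorname{span}\{\mathbf{g}(x):x\in[0,W]\}$, so its dimension is at most $\dim\operatorname{span}\{\mathbf{g}(x):x\in[0,W]\}$. Since $W<2$ and $\delta<\lambda/2$ there is no aliasing, so \cite{yin2012jsac} Lemma~1 applied to this equi‑spaced array bounds the latter by $W(N+1)\delta/\lambda+o(N)=W\mathcal{D}/\lambda+o(N)$; with $\delta$ held fixed, $N=M\overline{D}/\delta=\Theta(M)$, so this equals $(b_2-b_1)M\overline{D}/\lambda+o(M)$.

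It remains to pass from the quantised vectors back to the true ones. Set $\mathbf{R}\triangleq\mathbb{E}_x[\boldsymbol{\alpha}(x)\boldsymbol{\alpha}(x)^H]$ and $\tilde{\mathbf{R}}\triangleq\mathbb{E}_x[\tilde{\boldsymbol{\alpha}}(x)\tilde{\boldsymbol{\alpha}}(x)^H]$, the expectations over the AOA law on $[b_1,b_2]$; the signal subspace in question is the range of $\mathbf{R}$. Because $|x|\le1$ and $|\tilde{d}_m-d_m|\le\delta/2$ we have $|\boldsymbol{\alpha}(x)_m-\tilde{\boldsymbol{\alpha}}(x)_m|\le\pi\delta/\lambda$, hence $\|\boldsymbol{\alpha}(x)-\tilde{\boldsymbol{\alpha}}(x)\|\le\sqrt{M}\,\pi\delta/\lambda$ while $\|\boldsymbol{\alpha}(x)\|=\sqrt{M}$; writing $\boldsymbol{\alpha}\boldsymbol{\alpha}^H-\tilde{\boldsymbol{\alpha}}\tilde{\boldsymbol{\alpha}}^H=\boldsymbol{\alpha}(\boldsymbol{\alpha}-\tilde{\boldsymbol{\alpha}})^H+(\boldsymbol{\alpha}-\tilde{\boldsymbol{\alpha}})\tilde{\boldsymbol{\alpha}}^H$ and using the triangle inequality for the trace norm gives $\|\mathbf{R}-\tilde{\mathbf{R}}\|_1=O(M\delta)$. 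By Mirsky's inequality $\sum_{i>K}\lambda_i(\mathbf{R})\le\sum_{i>K}\lambda_i(\tilde{\mathbf{R}})+\|\mathbf{R}-\tilde{\mathbf{R}}\|_1$, and since $\operatorname{rank}(\tilde{\mathbf{R}})\le(b_2-b_1)M\overline{D}/\lambda+o(M)\triangleq K$ the right‑hand side is $O(M\delta)$: all the eigen‑energy of $\mathbf{R}$ outside its top $K$ eigenvalues is $O(M\delta)$, while $\operatorname{tr}\mathbf{R}=M$. Consequently the number of non‑negligible eigenvalues of $\mathbf{R}$ is at most $K+O(M\delta)$; letting $M\to\infty$ with $\delta$ fixed and then $\delta\to0$ yields $\dim\{\mathcal{B}\}\le(b_2-b_1)M\overline{D}/\lambda+o(M)$, whence, as in the first paragraph, the $\mathcal{C}$‑bound.

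The step I expect to be the genuine obstacle is the last one, and it pins down the correct reading of the statement: for generic antenna positions the vectors $\{\boldsymbol{\alpha}(x):x\in[b_1,b_2]\}$ form a Vandermonde‑type family, so the algebraic dimension of $\mathcal{B}$ is $M$, and the bound can only be a bound on the \emph{effective} rank of $\mathbf{R}$ — the count of non‑negligible covariance eigenvalues, which is exactly the quantity that drives condition~(\ref{condition}) (the same caveat applies to Proposition~\ref{propRank_Interleaved}). The technical care is in trading off the grid size $\delta$ against the perturbation term $O(M\delta)$ while keeping the $o(N)$ error of \cite{yin2012jsac} Lemma~1 equal to $o(M)$; the iterated limit above does this provided that error is $o(N)$ uniformly in the interval's location. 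If one prefers to avoid \cite{yin2012jsac} Lemma~1, an equivalent self‑contained route is to expand $x\mapsto\boldsymbol{\alpha}(x)$ in a Fourier series on $[b_1,b_2]$, $\boldsymbol{\alpha}(x)=\sum_{n\in\mathbb{Z}}\mathbf{c}_n e^{j2\pi n(x-b_1)/(b_2-b_1)}$: since $|(\mathbf{c}_n)_m|=O\big(1/|n+d_m(b_2-b_1)/\lambda|\big)$ and $d_m(b_2-b_1)/\lambda\in[0,(b_2-b_1)\mathcal{D}/\lambda]$, the family $\{\mathbf{c}_n\}$ keeps all but an $O(M/L)$ fraction of its total $\ell^2$‑energy on the $\lceil(b_2-b_1)\mathcal{D}/\lambda\rceil+2L$ consecutive indices around $[-(b_2-b_1)\mathcal{D}/\lambda,\,0]$, and truncating to that window with $L=o(M)$, $L\to\infty$, again leaves a covariance tail of size $o(M)$.
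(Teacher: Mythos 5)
Your proposal is correct and is essentially the paper's own argument: both reduce the random array to a dense virtual equi-spaced array covering the aperture $\mathcal{D}$ (so that each $\boldsymbol{\alpha}(x)$ becomes a coordinate selection of the ULA steering vector), invoke the equi-spaced dimension result of \cite{yin2012jsac} rescaled to spacing below $\lambda/2$, and handle $\mathcal{C}$ by subadditivity of dimension over the $Q$ cluster subspaces. The only substantive difference is rigor at the sampling step: where the paper simply lets the virtual grid become infinitely fine ($N\to\infty$) and treats the antenna positions as exact grid samples, you keep a finite spacing $\delta$, bound the rounding error via a trace-norm/Mirsky perturbation of the covariance, and read ``$\dim$'' as the number of non-negligible eigenvalues --- which is indeed the effective-rank sense in which the paper states and uses the bound.
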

\begin{proof}
\quad \emph{Proof:}
See Appendix \ref{proof:propDim_nonCali}.
\end{proof}
Proposition \ref{propDim_nonCali} indicates the dimensions spanned in massive MIMO regime by elementary paths for (i) single cluster of AOA, and (ii) multiple disjoint clusters of AOA, respectively. The following result now directly generalizes Proposition \ref{propRank_Interleaved} to random arrays.

\begin{proposition}\label{propRank_NonCali}
With a bounded support of AOAs $\overline{\theta}_d $ as in (\ref{Eq:AOASet}), the rank of channel covariance matrix ${{\mathbf{R}}}$ satisfies:
\begin{equation}
{\operatorname{rank} ({{\mathbf{R}}})} \leq {\sum\limits_{q = 1}^Q {\left( {\cos (\theta _q^{{\text{min}}}) - \cos (\theta _q^{{\text{max}}})} \right)} \frac{M \overline{D}}{\lambda }} + o(M),
%\text{ as }M \rightarrow \infty,
\end{equation}
\end{proposition}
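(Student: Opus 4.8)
The plan is to reduce Proposition~\ref{propRank_NonCali} to Proposition~\ref{propDim_nonCali} by showing that the column space of $\mathbf{R}$ coincides with the subspace $\mathcal{C}$ once we pass to the variable $x=\cos\theta$. First I would write out the covariance explicitly. Substituting (\ref{Eq:simpleChanModel}) into $\mathbf{R}=\mathbb{E}\{\mathbf{h}\mathbf{h}^H\}$ and using that the phases $\varphi_{p}$ are i.i.d.\ uniform (so $\mathbb{E}\{e^{j(\varphi_p-\varphi_{p'})}\}=0$ for $p\neq p'$) and independent of the path angles, all cross terms vanish and one is left with
\begin{equation*}
\mathbf{R}=\beta\,\mathbb{E}_\theta\!\left\{\mathbf{a}(\theta)\mathbf{a}(\theta)^H\right\}=\beta\int_{\overline{\theta}_d}\mathbf{a}(\theta)\mathbf{a}(\theta)^H\,p_d(\theta)\,d\theta ,
\end{equation*}
where $p_d$ is supported on $\overline{\theta}_d$ as in (\ref{Eq:AOASet}).

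Second I would establish $\operatorname{range}(\mathbf{R})\subseteq\operatorname{span}\{\mathbf{a}(\theta):\theta\in\overline{\theta}_d\}$. This is immediate from the integral form: if $\mathbf{u}$ is orthogonal to $\mathbf{a}(\theta)$ for every $\theta\in\overline{\theta}_d$, then $\mathbf{u}^H\mathbf{R}\mathbf{u}=\beta\int_{\overline{\theta}_d}|\mathbf{u}^H\mathbf{a}(\theta)|^2 p_d(\theta)\,d\theta=0$, and since $\mathbf{R}$ is positive semidefinite this forces $\mathbf{R}\mathbf{u}=\mathbf{0}$. Hence $\big(\operatorname{span}\{\mathbf{a}(\theta)\}\big)^{\perp}\subseteq\operatorname{null}(\mathbf{R})$, so taking orthogonal complements in $\mathbb{C}^M$ gives the inclusion, and therefore $\operatorname{rank}(\mathbf{R})\le\dim\operatorname{span}\{\mathbf{a}(\theta):\theta\in\overline{\theta}_d\}$.

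Third I would perform the substitution $x=\cos\theta$. For the random array (\ref{Eq:steeVecNonCal}) the array response depends on $\theta$ only through $\cos\theta$, and indeed $\mathbf{a}(\theta)=\boldsymbol{\alpha}(\cos\theta)$ with $\boldsymbol{\alpha}(\cdot)$ the vector defined in Proposition~\ref{propDim_nonCali}. Since $\cos$ is a continuous strictly decreasing bijection on $[0,\pi]$, the cluster $[\theta_q^{\min},\theta_q^{\max}]$ maps onto the interval $[\cos\theta_q^{\max},\cos\theta_q^{\min}]$, and the clusters stay disjoint and inside $[-1,1]$; thus $\operatorname{span}\{\mathbf{a}(\theta):\theta\in\overline{\theta}_d\}=\mathcal{C}$ with the identification $b_q^{\min}=\cos\theta_q^{\max}$ and $b_q^{\max}=\cos\theta_q^{\min}$, which satisfies the ordering hypothesis of Proposition~\ref{propDim_nonCali}. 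That proposition then yields $\dim\{\mathcal{C}\}\le\sum_{q=1}^Q(b_q^{\max}-b_q^{\min})M\overline{D}/\lambda+o(M)=\sum_{q=1}^Q(\cos\theta_q^{\min}-\cos\theta_q^{\max})M\overline{D}/\lambda+o(M)$, and combined with the second step this is exactly the claimed bound on $\operatorname{rank}(\mathbf{R})$.

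The main difficulty is not in this reduction, which is essentially bookkeeping, but upstream in Proposition~\ref{propDim_nonCali}: bounding $\dim\{\mathcal{C}\}$ amounts to counting how many (near-)independent vectors $\boldsymbol{\alpha}(x)$ can coexist over a union of intervals when the antenna positions $d_m$ are random, and this is where both the $o(M)$ term and the inequality (rather than equality) come from. Within the present argument the only points requiring a little care are that the change of variable preserves disjointness of the clusters (true because $\cos$ is injective on $[0,\pi]$) and that the span/rank manipulations are valid pointwise in $\mathbb{C}^M$, which holds since $M$ is finite so $\mathbf{R}$ and all the $\mathbf{a}(\theta)$ live in a fixed finite-dimensional space.
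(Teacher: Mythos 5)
Your proposal is correct and follows the same route as the paper: the paper's proof is a one-line reduction, "replace $x$ with $\cos(\theta)$ in Proposition~\ref{propDim_nonCali}," and your argument simply fills in the details this reduction implicitly relies on (the cross terms vanishing so that $\mathbf{R}=\beta\int_{\overline{\theta}_d}\mathbf{a}(\theta)\mathbf{a}(\theta)^H p_d(\theta)\,d\theta$, the positive-semidefiniteness argument giving $\operatorname{range}(\mathbf{R})\subseteq\operatorname{span}\{\mathbf{a}(\theta):\theta\in\overline{\theta}_d\}$, and the fact that $\cos$ maps the disjoint AOA clusters onto disjoint subintervals of $[-1,1]$, up to reindexing, so that Proposition~\ref{propDim_nonCali} applies to $\mathcal{C}$). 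No gaps; the only added care needed is the trivial relabeling of clusters so the mapped intervals satisfy the stated ordering, which you note.
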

\begin{proof}
\quad \emph{Proof:} We can readily obtain this result by replacing $x$ with $\cos(\theta)$ in Proposition \ref{propDim_nonCali}.
\end{proof}
This result above suggests that the low-dimensional feature of signal subspaces in massive MIMO is not critically linked to the Fourier structure of the steering vectors. Furthermore, it should be noted that the above upper bound is actually very tight for large $M$, as witnessed from the simulation in Fig. \ref{fig:RankLinearRandom}, where we take $Q=1, \overline{D} = \lambda/2$ for example. The AOA spread is 40 degrees, and the closed form model refers to
$$f(M) \triangleq {\sum\limits_{q = 1}^Q {\left( {\cos (\theta _q^{{\text{min}}}) - \cos (\theta _q^{{\text{max}}})} \right)} \frac{M \overline{D}}{\lambda }}.$$
We can observe that ${\operatorname{rank} ({{\mathbf{R}}})}$ is well approximated by $f(M)$.
\begin{figure}[h]
  \centering
  \includegraphics[width=3.2in]{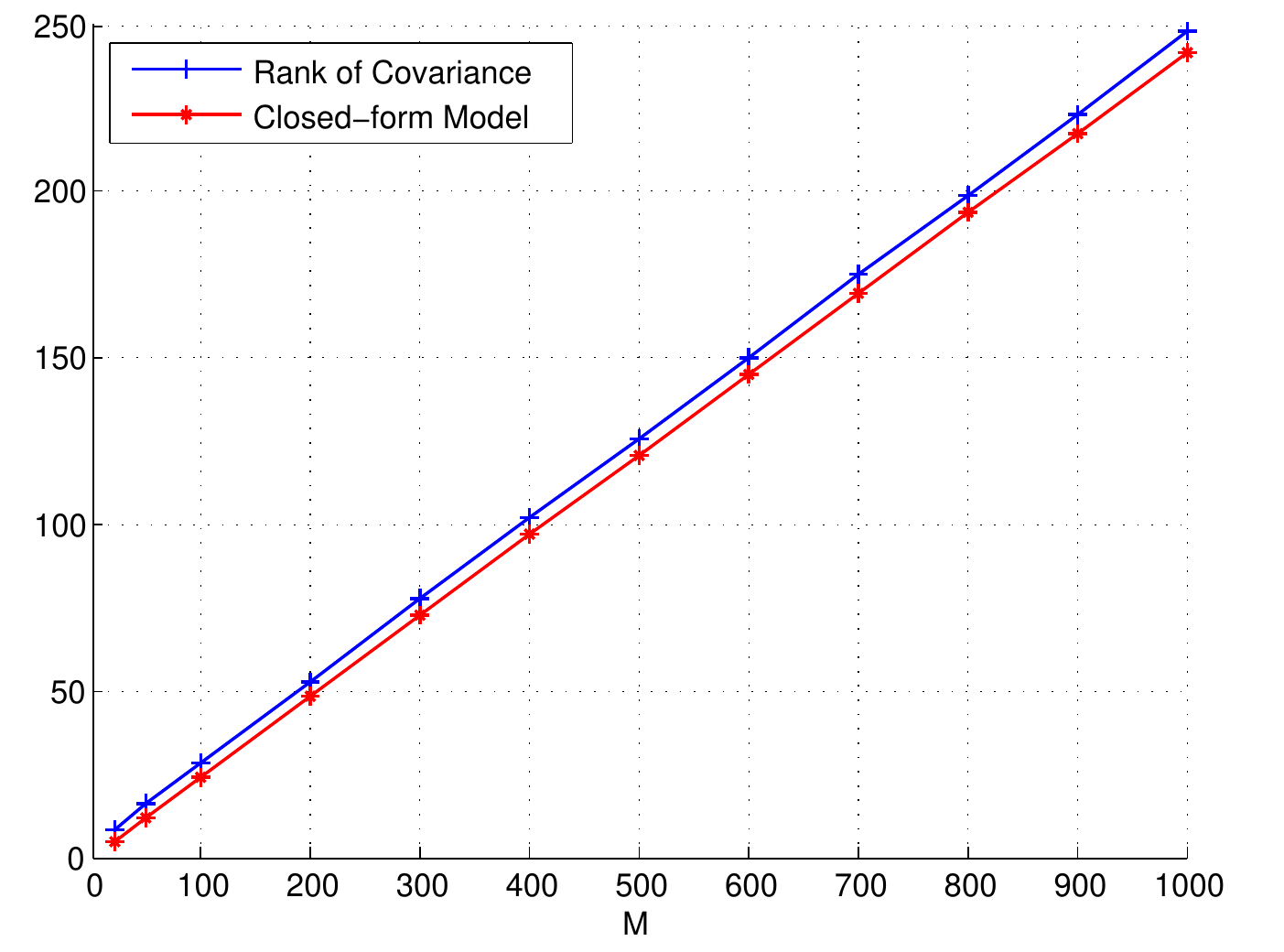}\\
  \caption{Closed-form rank model for the channel covariance vs. actual rank.} \label{fig:RankLinearRandom}
\end{figure}

Proposition \ref{propRank_NonCali} and Fig. \ref{fig:RankLinearRandom} suggest that a property of rank additivity holds for multiple disjoint clusters of AOAs in the massive MIMO regime, i.e. for $M\rightarrow +\infty.$ In the following proposition we extend the results of Corollary \ref{theoremInterleavedAOA} to the case of random arrays under the weaker assumption of rank additivity for the covariance matrices of the desired and interference channels.

\begin{proposition}\label{prop_rank_additivity_error_covariance}
%Let the covariance matrices of the desired channel $\mathbf{R}_d$  and the interference channel  $\mathbf{R}_i$ satisfy the property
Let $\mathbf{R}_d$ be the covariance matrix of desired channel and $\mathbf{R}_i$ be the covariance of the sum of all interference channels. If $\mathbf{R}_d$ and $\mathbf{R}_i$ satisfy the following rank additivity property
\begin{equation*}
    \mathrm{rank} (\mathbf{R}_d+\mathbf{R}_i) = \mathrm{rank} (\mathbf{R}_d) + \mathrm{rank} (\mathbf{R}_i),
\end{equation*}
then in the high SNR regime, the linear MMSE estimate of the desired channel is error free, or, in other words, its error covariance matrix $\mathbf{C}_e$ vanishes.
\end{proposition}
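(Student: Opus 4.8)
The plan is to work directly with the closed-form expression for the linear MMSE error covariance and show that the rank-additivity hypothesis forces it to vanish as $\sigma_n^2 \to 0$. Recall that for the estimator in (\ref{Eq:EstimatorDesired}) the error covariance is the Schur-complement-type expression
\begin{equation*}
\mathbf{C}_e = \mathbf{R}_d - \mathbf{R}_d \left( \sigma_n^2 \mathbf{I}_M + \tau (\mathbf{R}_d + \mathbf{R}_i) \right)^{-1} \mathbf{R}_d \,\tau,
\end{equation*}
up to the $\tau$-scaling bookkeeping coming from $\bar{\mathbf{S}}$. First I would reduce to the essential object: the high-SNR limit $\lim_{\sigma_n^2 \to 0} \mathbf{C}_e$. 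Because all matrices are positive semidefinite, the limit exists and equals $\mathbf{R}_d - \mathbf{R}_d (\tau(\mathbf{R}_d + \mathbf{R}_i))^{\dagger}\mathbf{R}_d \tau$ when the range condition $\operatorname{range}(\mathbf{R}_d) \subseteq \operatorname{range}(\mathbf{R}_d + \mathbf{R}_i)$ holds — and that range inclusion is automatic since $\mathbf{R}_d \preceq \mathbf{R}_d + \mathbf{R}_i$. So it suffices to show $\mathbf{R}_d = \mathbf{R}_d (\mathbf{R}_d + \mathbf{R}_i)^{\dagger}\mathbf{R}_d$.

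The key step is to translate the rank-additivity hypothesis into a statement about subspaces. The general fact is that for PSD matrices, $\operatorname{rank}(\mathbf{R}_d + \mathbf{R}_i) = \operatorname{rank}(\mathbf{R}_d) + \operatorname{rank}(\mathbf{R}_i)$ holds if and only if $\operatorname{range}(\mathbf{R}_d) \cap \operatorname{range}(\mathbf{R}_i) = \{\mathbf{0}\}$; and since $\operatorname{range}(\mathbf{R}_d + \mathbf{R}_i) = \operatorname{range}(\mathbf{R}_d) + \operatorname{range}(\mathbf{R}_i)$ always, rank additivity is exactly the statement that this sum is a \emph{direct} sum. I would prove this equivalence briefly (the nontrivial direction uses that a vector in the intersection would be double-counted in any spanning set, forcing the rank to drop). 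Note, however, that a direct sum of subspaces need not be \emph{orthogonal}; this is the subtlety that distinguishes the present proposition from Corollary \ref{theoremInterleavedAOA}, where the AOA-disjointness gave orthogonality outright.

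With the direct-sum decomposition $\mathbb{C}^M = \operatorname{range}(\mathbf{R}_d) \oplus \operatorname{range}(\mathbf{R}_i) \oplus \operatorname{null}(\mathbf{R}_d + \mathbf{R}_i)$ in hand, I would finish by diagonalizing in a basis adapted to this splitting. Writing $\mathbf{R}_d$ and $\mathbf{R}_i$ in block form relative to this decomposition, one checks that $\mathbf{R}_d$ is supported on the first block and $\mathbf{R}_i$ on the second, so that on $\operatorname{range}(\mathbf{R}_d)$ the operator $(\mathbf{R}_d + \mathbf{R}_i)$ acts identically to $\mathbf{R}_d$; hence $(\mathbf{R}_d + \mathbf{R}_i)^{\dagger}$ agrees with $\mathbf{R}_d^{\dagger}$ on that range, which gives $\mathbf{R}_d(\mathbf{R}_d + \mathbf{R}_i)^{\dagger}\mathbf{R}_d = \mathbf{R}_d \mathbf{R}_d^{\dagger}\mathbf{R}_d = \mathbf{R}_d$, and therefore $\mathbf{C}_e \to \mathbf{0}$. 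The main obstacle is the non-orthogonality just mentioned: one cannot simply invoke a single orthonormal eigenbasis for both matrices, so the cleanest route is either (i) to argue via ranges and pseudoinverses as above, being careful that $\mathbf{R}_d$ and $\mathbf{R}_i$ genuinely vanish \emph{outside} their respective range blocks (true because they are PSD, so range and "support" coincide), or (ii) to pick a basis in which the splitting is orthogonalized and track how $\mathbf{R}_d + \mathbf{R}_i$ inherits a block-diagonal form there. I would take route (i) as it avoids any change-of-basis computation.
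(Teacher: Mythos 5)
Your reduction of the problem to the single identity $\mathbf{R}_d(\mathbf{R}_d+\mathbf{R}_i)^{\dagger}\mathbf{R}_d=\mathbf{R}_d$, and your translation of rank additivity into $\operatorname{range}(\mathbf{R}_d)\cap\operatorname{range}(\mathbf{R}_i)=\{\mathbf{0}\}$, are both correct (and your handling of the $\sigma_n^2\to 0$ limit is, if anything, more explicit than the paper's). The gap is in the step that is supposed to prove the identity. In the oblique direct sum $\operatorname{range}(\mathbf{R}_d)\oplus\operatorname{range}(\mathbf{R}_i)\oplus\operatorname{null}(\mathbf{R}_d+\mathbf{R}_i)$ it is \emph{not} true that ``$\mathbf{R}_d$ is supported on the first block and $\mathbf{R}_i$ on the second'': a positive semidefinite matrix vanishes on the \emph{orthogonal complement} of its range, not on an arbitrary complementary summand, so $\mathbf{R}_i$ annihilates $\operatorname{range}(\mathbf{R}_d)$ only if the two ranges are orthogonal---exactly the hypothesis you correctly note you do not have. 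Hence $(\mathbf{R}_d+\mathbf{R}_i)$ does not act as $\mathbf{R}_d$ on $\operatorname{range}(\mathbf{R}_d)$, and $(\mathbf{R}_d+\mathbf{R}_i)^{\dagger}$ does not agree with $\mathbf{R}_d^{\dagger}$ there. Concretely, take $M=2$, $\mathbf{R}_d=\mathbf{e}_1\mathbf{e}_1^H$, $\mathbf{R}_i=\mathbf{v}\mathbf{v}^H$ with $\mathbf{v}=(\cos\phi,\sin\phi)^T$, $0<\phi<\pi/2$: rank additivity holds, yet $\mathbf{R}_i\mathbf{e}_1=\cos\phi\,\mathbf{v}\neq\mathbf{0}$ and $(\mathbf{R}_d+\mathbf{R}_i)^{-1}\mathbf{e}_1=(1,-\cot\phi)^T\neq\mathbf{e}_1=\mathbf{R}_d^{\dagger}\mathbf{e}_1$. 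The sandwich $\mathbf{R}_d(\mathbf{R}_d+\mathbf{R}_i)^{-1}\mathbf{R}_d=\mathbf{R}_d$ does hold in this example, so your target identity is true---but not for the reason you give; as written, route (i) silently reduces the proposition to the orthogonal case already covered by Corollary \ref{theoremInterleavedAOA}, whereas the whole content of the rank-additivity hypothesis is that the signal and interference subspaces may be oblique.

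What is missing is a genuine theorem about pseudoinverses of sums (or an equivalent), which is precisely how the paper closes this step: it invokes the formula of \cite{fill:00} for $(\mathbf{R}_d+\mathbf{R}_i)^{\dagger}$ under rank additivity, namely $(\mathbf{R}_d+\mathbf{R}_i)^{\dagger}=(\mathbf{I}-\mathbf{S}^{\dagger})\mathbf{R}_d^{\dagger}(\mathbf{I}-\mathbf{T}^{\dagger})+\mathbf{S}^{\dagger}\mathbf{R}_i^{\dagger}\mathbf{T}^{\dagger}$ with $\mathbf{S},\mathbf{T}$ built from the projectors onto the two column spaces, proves $\mathbf{T}^{\dagger}\mathbf{R}_d=\mathbf{0}$ and $\mathbf{R}_d\mathbf{S}^{\dagger}=\mathbf{0}$ via a mixed-type reverse-order law \cite{rakha:04}, and concludes $\mathbf{C}_e=\mathbf{R}_d-\mathbf{R}_d\mathbf{R}_d^{\dagger}\mathbf{R}_d=\mathbf{0}$. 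If you want to stay within your range-theoretic framing, a clean repair is: since $\operatorname{range}(\mathbf{R}_d)\subseteq\operatorname{range}(\mathbf{R}_d+\mathbf{R}_i)$ one has $\mathbf{R}_d(\mathbf{R}_d+\mathbf{R}_i)^{\dagger}(\mathbf{R}_d+\mathbf{R}_i)=\mathbf{R}_d$, hence $\mathbf{C}_e=\mathbf{R}_d-\mathbf{R}_d(\mathbf{R}_d+\mathbf{R}_i)^{\dagger}\mathbf{R}_d=\mathbf{R}_d(\mathbf{R}_d+\mathbf{R}_i)^{\dagger}\mathbf{R}_i$, i.e.\ the parallel sum of $\mathbf{R}_d$ and $\mathbf{R}_i$, whose range is known (Anderson--Duffin) to equal $\operatorname{range}(\mathbf{R}_d)\cap\operatorname{range}(\mathbf{R}_i)=\{\mathbf{0}\}$, so $\mathbf{C}_e=\mathbf{0}$. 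Either way, some such result must be proved or cited; the block-support argument as proposed does not close the gap.
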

\begin{proof}
\quad \emph{Proof:} In the case of absence of white Gaussian noise, i.e. $\sigma_n^2=0,$ and rank deficient signal and interference covariance matrices, the error covariance matrix of linear MMSE estimator \cite{Kay1993} can be generalized as
\begin{equation}\label{eq:general_error_covaraince}
    \mathbf{C}_e= \mathbf{R}_d -\mathbf{R}_d (\mathbf{R}_d+\mathbf{R}_i)^{\dag}\mathbf{R}_d
\end{equation}
where $(\cdot)^{\dag}$ denotes the Moore-Penrose generalized inverse of the matrix argument. Let us denote by $\mathbf{R}_{x}=\mathbf{U}_x \boldsymbol{\Sigma}_x \mathbf{U}_x^H,$ with $x \in \{ d,e \},$ $\mathbf{U}_x$ unitary matrix, and $\boldsymbol{\Sigma}_x$ diagonal matrix, the eigenvalue decomposition of the Hermitian matrix $\mathbf{R}_x.$ Then, $\mathbf{R}^{\dag}_x=\mathbf{U}_x \boldsymbol{\Sigma}_x^{\dag} \mathbf{U}_x^H,$ where the elements $i,j$ of the matrix $\boldsymbol{\Sigma}_x^{\dag}$ are given by
\begin{equation*}
    \boldsymbol{\Sigma}_{x,ij}^{\dag}= \left\{
                                         \begin{array}{ll}
                                           \boldsymbol{\Sigma}_{x,ii}^{-1}, & \hbox{if $i=j$ and $\boldsymbol{\Sigma}_{x,ii}\neq 0$;} \\
                                           0, & \hbox{otherwise.}
                                         \end{array}
                                       \right.
\end{equation*}
Additionally, $\widetilde{\mathbf{U}}_x$ denotes the column space of $\mathbf{R}_x$ and $\widetilde{\boldsymbol{\Sigma}}_x$ the corresponding nonzero eigenvalues such that $\mathbf{R}_x=\widetilde{\mathbf{U}}_x\widetilde{\boldsymbol{\Sigma}}_x \widetilde{\mathbf{U}}_x^H.$ Then, under the assumption of rank additivity of the covariance matrices $\mathbf{R}_d$ and $\mathbf{R}_i$, the theorem on the Moore-Penrose generalized inverse for sum of matrices in \cite{fill:00} yields
\begin{equation}\label{eq:pseudoinverse_sum}
    (\mathbf{R}_d+\mathbf{R}_i)^{\dag} = (\mathbf{I}-\mathbf{S}^{\dag}) \mathbf{R}_d^{\dag} (\mathbf{I}-\mathbf{T}^{\dag}) +\mathbf{S}^{\dag} \mathbf{R}_i^{\dag} \mathbf{T}^{\dag},
\end{equation}
where $\mathbf{S}=\widetilde{\mathbf{U}}_i\widetilde{\mathbf{U}}_i^H(\mathbf{I}-\widetilde{\mathbf{U}}_d\widetilde{\mathbf{U}}_d^H)$ and $\mathbf{T}=(\mathbf{I}-\widetilde{\mathbf{U}}_d\widetilde{\mathbf{U}}_d^H) \widetilde{\mathbf{U}}_i\widetilde{\mathbf{U}}_i^H.$

Let us observe that
\begin{equation}\label{eq:orthogonality_propreties}
    \mathbf{T}^{\dag} \mathbf{R}_d =\mathbf{0}  \quad \text{ and } \quad \mathbf{R}_d \mathbf{S}^{\dag}=\mathbf{0}.
\end{equation}
We focus on the first equality. The proof of the second equation follows along the same line. By appealing to the mixed type reverse order laws of the $r\times s$ matrix $\mathbf{A}$ and the $s\times t$ matrix $\mathbf{B}$    in \cite{rakha:04}
\begin{equation*}
    (\mathbf{A} \mathbf{B})^{\dag} =\mathbf{B}^H (\mathbf{A}^H\mathbf{A}\mathbf{B}\mathbf{B}^H)^{\dag} \mathbf{A}^H,
\end{equation*}
$\mathbf{T}^{\dag} $ can be rewritten as
\begin{align}\label{eq:pseudoinverse_product}
    \mathbf{T}^{\dag} &=  \widetilde{\mathbf{U}}_i\widetilde{\mathbf{U}}_i^H \left[(\mathbf{I}-\widetilde{\mathbf{U}}_d\widetilde{\mathbf{U}}_d^H) \widetilde{\mathbf{U}}_i\widetilde{\mathbf{U}}_i^H \right]^{\dag} (\mathbf{I}-\widetilde{\mathbf{U}}_d\widetilde{\mathbf{U}}_d^H) \nonumber \\
    &= \widetilde{\mathbf{U}}_i\widetilde{\mathbf{U}}_i^H \mathbf{T}^{\dag} (\mathbf{I}-\widetilde{\mathbf{U}}_d\widetilde{\mathbf{U}}_d^H). \nonumber
\end{align}
The first equality is obtained utilizing the fact that the  matrices  $\widetilde{\mathbf{U}}_i\widetilde{\mathbf{U}}_i^H $ and $ (\mathbf{I}-\widetilde{\mathbf{U}}_d\widetilde{\mathbf{U}}_d^H)$ are orthogonal projectors and thus idempotent. Then,
\begin{equation*}
    \mathbf{T}^{\dag} \mathbf{R}_d= \widetilde{\mathbf{U}}_i\widetilde{\mathbf{U}}_i^H  \mathbf{T}^{\dag} (\mathbf{I}-\widetilde{\mathbf{U}}_d\widetilde{\mathbf{U}}_d^H) \widetilde{\mathbf{U}}_d \widetilde{\boldsymbol{\Sigma}}_d \widetilde{\mathbf{U}}_d^H = \mathbf{0}.
\end{equation*}
Finally, substituting (\ref{eq:pseudoinverse_sum}) into (\ref{eq:general_error_covaraince}) and accounting for  orthogonality in (\ref{eq:orthogonality_propreties})
\begin{align}
    \mathbf{C}_e &=\mathbf{R}_d - \mathbf{R}_d \left[(\mathbf{I}-\mathbf{S}^{\dag}) \mathbf{R}_d^{\dag} (\mathbf{I}-\mathbf{T}^{\dag})  +\mathbf{S}^{\dag} \mathbf{R}_i^{\dag} \mathbf{T}^{\dag} \right] \mathbf{R}_d \nonumber \\
    &= \mathbf{R}_d - \mathbf{R}_d \mathbf{R}_d^{\dag} \mathbf{R}_d = \mathbf{0}. \nonumber
\end{align}
In the last equality we use one of the fundamental relations defining the Moore-Penrose generalized inverse.
\end{proof}

According to Proposition \ref{propRank_NonCali}, the rank additivity condition is in general satisfied when the AOA support of desired channel and that of interference channels span disjoint region of spaces, i.e., $\overline{\theta}_d \cap \overline{\theta}_i = \emptyset$. This property can be exploited in pilot decontamination or interference rejection.
%This low-rankness property can in turn be exploited in pilot decontamination or interference rejection as was previously done for equi-spaced arrays.
Fig. \ref{fig:ChanEstRandomArray} shows the channel estimation performance in the presence of contaminating pilots. In the simulation, we consider a 2-cell network. Each cell has one single-antenna user who uses identical pilot sequence. The mean squared error (MSE) of uplink channel estimation is shown. The simulation suggests that the MMSE channel estimator is able to rid itself from pilot contamination effects as the number of antennas is (even moderately) large, which verifies Proposition \ref{prop_rank_additivity_error_covariance}.
%% TODO: add a figure MSE VS M
\begin{figure}[h]
  \centering
  \includegraphics[width=3.2in]{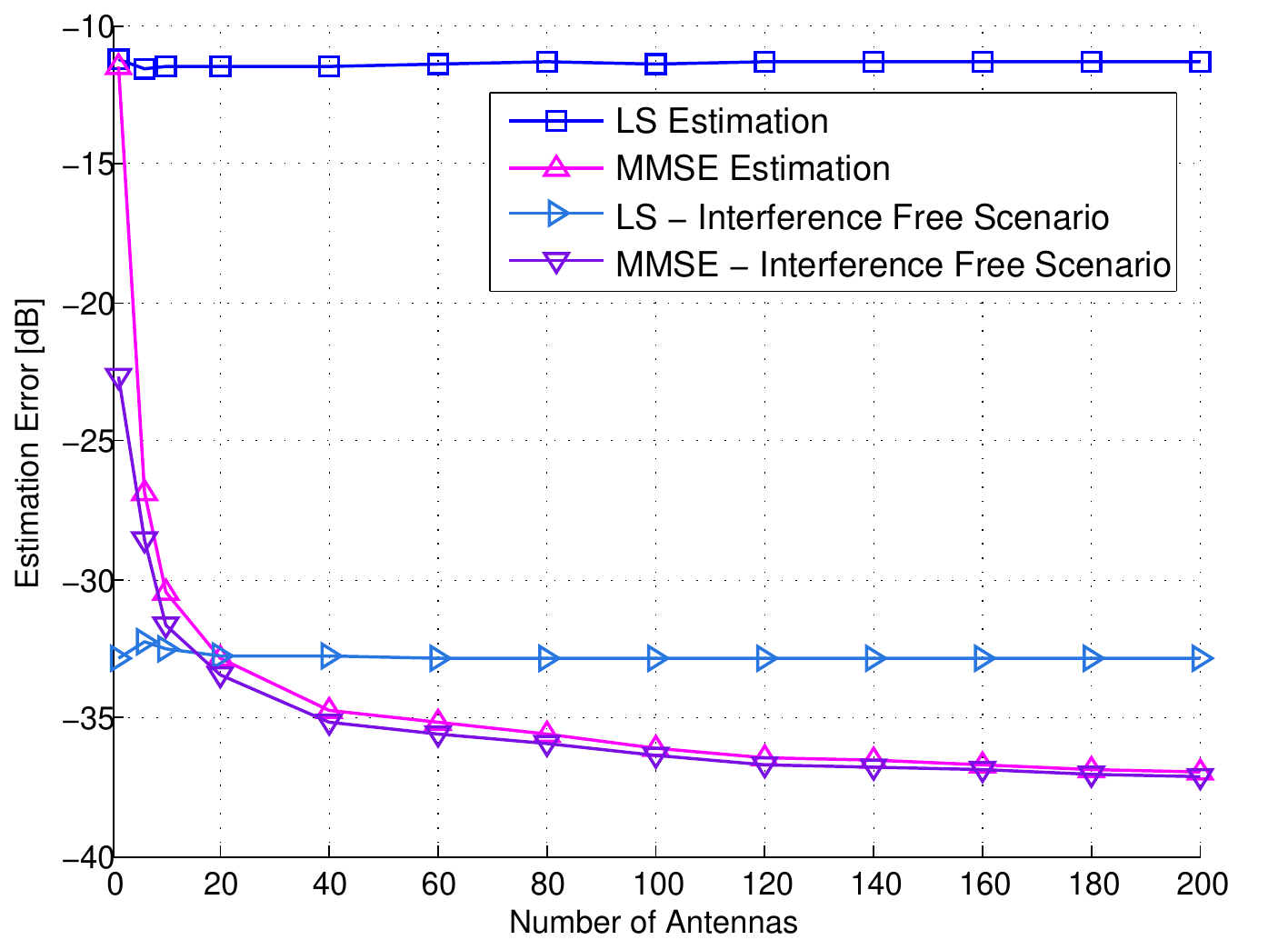}\\
  \caption{Channel estimation performance vs. $M$, $\overline{D} = \lambda/2$, 2-cell network, angle spread 30 degrees, $\overline{\theta}_d \cap \overline{\theta}_i = \emptyset$, cell-edge SNR is 20dB. We compare the standard Least Squares (LS) to MMSE estimators, in interference and interference-free scenarios.} \label{fig:ChanEstRandomArray}
\end{figure}

\section{Finite Rank Model in Distributed Arrays}
We now turn to another popular form of large scale antenna regime, often referred to in the literature as distributed antenna systems. In such a setting, a virtual base station is deployed having its $M$ antennas scattered throughout the cell.\footnote{For ease of exposition we temporarily consider a single cell setting in this section, i.e., $B=1$. However simulation is also done later in a multi-cell scenario.} We consider again the uplink in which joint combining across all BS antennas is assumed possible.
%Applications scenarios for this case include the ideal Cloud Radio Access Network (C-RAN) with remote radio heads, or network MIMO/CoMP systems with a large cluster of cells.
The $M$ base station antennas are assumed uniformly and randomly located in a fixed size network, serving single-antenna users. $M$ is allowed to grow large giving rise to a so-called dense network. Our model assumes a disk-shaped cell of radius $L$, although simulation and intuition confirm that the actual shape of the cell's boundary is irrelevant to the main result.

\subsection{Channel Model}
In order to facilitate the analysis, we adopt the one-ring model \cite{jakes1974, Shiu2000} where users are surrounded by a ring of $P$ local scatterers (see Fig. \ref{fig:DistributedMassiveMIMO}) located $r$ meters away from the user. The positions of the scatterers are considered to follow a uniform distribution on the ring. In the one-ring model, the propagation from user to base is assumed to follow $P$ paths (hereafter referred to as scattering paths), where each path $p$ bounces once on the $p$-th scatterer before reaching all $M$ destinations.\footnote{Note that this model assumes the BS antennas are high enough above clutter so that there is no local scattering around the BS antennas.}

Hence, the path length from user $k$ to the $m$-th antenna via the $p$-th path is $r + d_{kpm}$, where $d_{kpm}$ is the distance between the $p$-th scatterer of the $k$-th user and the $m$-th BS antenna. The path loss of the $p$-th scattering path is modeled by:
\begin{equation}\label{Eq:pathloss}
{\beta _{kpm}} = \frac{\alpha }{{{(d_{kpm}+r)}^\gamma }},
\end{equation}
where $\alpha$ is a constant that can be computed based on desired cell-edge SNR, and $\gamma$ is the path loss exponent. We scale the amplitude of each path by $\sqrt P$. The channel between user $k$ and all BS antennas is given by:
\begin{equation}\label{Eq:chanDistri}
{\mathbf{h}_k} \triangleq \frac{1}{\sqrt{P}}\sum\limits_{p = 1}^P {{{\mathbf{h}}_{kp}}},
%{\mathbf{h}_k} \triangleq \frac{1}{{\sqrt P }}\sum\limits_{p = 1}^P \left[ {\begin{array}{*{20}{c}}
%  {\beta _{kp1}{e^{ - j2\pi \frac{d_{kp1}+r}{\lambda }}}} \\
%   \vdots  \\
%  {\beta _{kpM}{e^{ - j2\pi \frac{d_{kpM}+r}{\lambda }}}}
%\end{array}} \right],
\end{equation}
where ${\mathbf{h}_{kp}}$ is the $p$-th scattering path vector channel between user $k$ and all base stations:
\begin{equation}\label{Eq:chanPth}
{\mathbf{h}_{kp}} \triangleq  \left[ {\begin{array}{*{20}{c}}
  {\sqrt{\beta _{kp1}}{e^{ - j2\pi \frac{d_{kp1}+r}{\lambda }}}} \\
   \vdots  \\
  {\sqrt{\beta _{kpM}}{e^{ - j2\pi \frac{d_{kpM}+r}{\lambda }}}}
\end{array}} \right] e^{j\varphi_{kp}},
\end{equation}
where $e^{j\varphi_{kp}}$ denotes the random common phase of that scattering path vector due to possible random perturbations of the user location around the ring center or the phase shift due to the reflection on the scatterer. $\varphi_{kp}$ is assumed i.i.d. and uniformly distributed between 0 to $2\pi$.

\begin{figure}[h]
  \centering
  \includegraphics[width=3.2in]{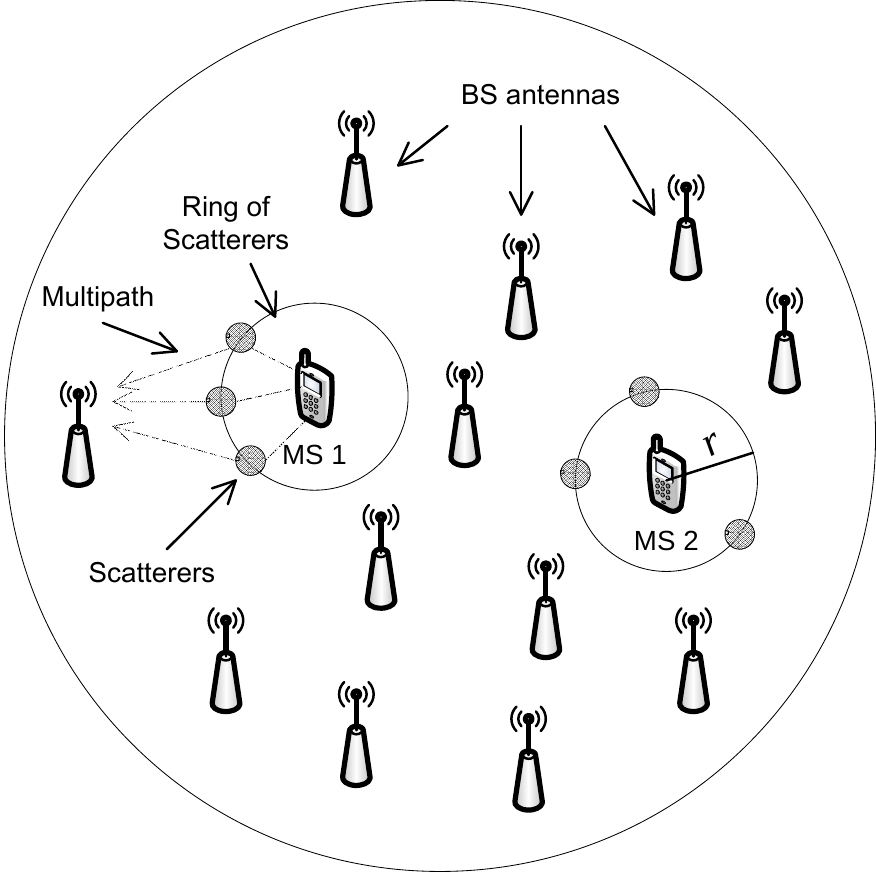}\\
  \caption{The distributed large-scale antenna setting with a one-ring model.} \label{fig:DistributedMassiveMIMO}
\end{figure}

\subsection{A Low-rank Model for Distributed Arrays}\label{Sec:LowRankDAS}
We have shown in section \ref{Sec:LowDim_Centralized} the low-dimension property for linear antenna array systems. In attacking this problem it is important to distinguish the rank reduction effect due to path loss from the intrinsic finite-rank behavior of the large antenna channel covariance in an equal path loss regime. In fact, in an extended network (i.e. where some base station antennas can be arbitrarily far from some users), the signal of any given user will be received over only a limited number of antennas in its vicinity, thereby effectively limiting the channel rank to the size of this neighborhood. To circumvent this problem, we consider below a (dense) network where the path loss terms are set artificially to be all equal (to one) and study the finite-rankness under such conditions. In this model, the channel covariance is defined as ${{\mathbf{R}}} \triangleq \mathbb{E}\{ {{{\mathbf{h}}}{\mathbf{h}}^H} \}$ where the expectation is taken over the random positions of the scatterers on the ring. Note that our analysis indicates that a randomization over the user's location inside the scattering's disk would produce an identical upper bound on the rank.
%It is natural to ask: will the distributed antenna array still exhibit low-dimensional signal subspace? It is well known that for a distributed antenna system some elements of the channel are always close to zero due to large path loss. This is one factor that reduces the rank of channel covariance matrix. However, there is another factor affecting the rank. We will show in the following that the dimension of the signal space is also limited by the size of the scatterer ring.

%In order to avoid the influence of path loss and facilitate the analysis, we assume unit path loss in the network, i.e., $\beta = 1$.
\begin{theorem}\label{theoremRankDAS}
The rank of the channel covariance matrix for a distributed antenna system satisfies:
\begin{equation}\label{Eq:RankDAS}
%\mathop {\lim }\limits_{M \to \infty } {\text{rank}({\mathbf{R}})} \lesssim \frac{4\pi r}{\lambda}+ o(1)
%{\text{rank}({\mathbf{R}})} \leq \min \left(M, \frac{4\pi r}{\lambda} + o(r) \right).
{\text{rank}({\mathbf{R}})} \leq  \frac{4\pi r}{\lambda} + o(r).
\end{equation}
%where $\lambda$ is the wavelength.
\end{theorem}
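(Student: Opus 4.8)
The plan is to strip the random phases and the (unit) path loss from $\mathbf{R}$ so that it becomes an average of rank-one steering-vector outer products over the scatterer position, and then to bound the dimension of the span of those steering vectors through a Jacobi--Anger expansion whose effective length is governed by the decay of Bessel functions. First I would simplify $\mathbf{R}$: since the per-path phases $\varphi_{kp}$ are i.i.d.\ uniform and independent of the scatterer positions, every cross term between distinct scattering paths in $\mathbb{E}\{\mathbf{h}_k\mathbf{h}_k^H\}$ averages to zero, while within a single path both $e^{j\varphi_{kp}}$ and the common factor $e^{-j2\pi r/\lambda}$ disappear from the rank-one outer product; as the $P$ paths are identically distributed one is left with $\mathbf{R}=\mathbb{E}_{\psi}\{\mathbf{a}(\psi)\mathbf{a}(\psi)^{H}\}$, where $\psi\sim\mathcal{U}(0,2\pi)$ is the angular position of a generic scatterer on the ring and $\mathbf{a}(\psi)$ is the vector whose $m$-th entry is $e^{-j2\pi d_m(\psi)/\lambda}$, $d_m(\psi)$ being the distance from that scatterer to antenna $m$. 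Hence $\operatorname{rank}(\mathbf{R})\le\dim\operatorname{span}\{\mathbf{a}(\psi):\psi\in[0,2\pi)\}$, and it suffices to exhibit a subspace of dimension $4\pi r/\lambda+o(r)$ that contains (essentially) all the $\mathbf{a}(\psi)$.

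Next I would place the user at the origin and antenna $m$ at polar coordinates $(\rho_m,\phi_m)$, so that $d_m(\psi)=\sqrt{\rho_m^2+r^2-2r\rho_m\cos(\phi_m-\psi)}=\rho_m-r\cos(\phi_m-\psi)+O(r^2/\rho_m)$ whenever the antenna lies in the far field of the scattering ring. Discarding the $\psi$-independent offset $\rho_m$ (a deterministic diagonal unitary, irrelevant to the rank) and the higher-order residual, the $m$-th entry of $\mathbf{a}(\psi)$ reduces to $\exp\!\big(j\tfrac{2\pi r}{\lambda}\cos(\phi_m-\psi)\big)$, which by the Jacobi--Anger identity equals $\sum_{n\in\mathbb{Z}}j^{n}J_{n}\!\big(\tfrac{2\pi r}{\lambda}\big)e^{jn\phi_m}e^{-jn\psi}$. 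Thus, to within the stated approximation, $\mathbf{a}(\psi)\in\operatorname{span}\{\mathbf{v}_n:n\in\mathbb{Z}\}$, where $\mathbf{v}_n$ is the \emph{fixed} vector with $m$-th entry $e^{jn\phi_m}$ and the $n$-th coefficient carries weight $J_n(2\pi r/\lambda)$.

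The bound then follows from the concentration of Bessel functions: $|J_n(z)|$ decays super-exponentially once $|n|$ exceeds $z$, with a transition region of width $O(z^{1/3})$. Taking $z=2\pi r/\lambda$, only the indices $|n|\le 2\pi r/\lambda+O(r^{1/3})$ carry non-negligible weight, i.e.\ at most $2\cdot\tfrac{2\pi r}{\lambda}+O(r^{1/3})+1=\tfrac{4\pi r}{\lambda}+o(r)$ of them, so the $\mathbf{a}(\psi)$ (essentially) live in a subspace of that dimension and $\operatorname{rank}(\mathbf{R})\le\tfrac{4\pi r}{\lambda}+o(r)$. This is the same ``$2\pi r/\lambda$ effective degrees of freedom'' phenomenon that governs any wavefield radiated by a region of radius $r$ (compare the compact-array analysis of \cite{kennedy2007}): here it is the ring of scatterers that plays the role of the radiating aperture, which is why the rank is set by $r$ and not by $M$ or by the cell size $L$.

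The main obstacle is making the $o(r)$ bookkeeping rigorous and uniform in $M$. Three things need care: (i) upgrading the approximate span-containment to an honest rank bound, e.g.\ by an $\varepsilon$-rank/perturbation argument that controls how much the discarded Bessel tail can raise the dimension; (ii) handling the higher-order harmonics $\sum_{k\ge2}c_{m,k}\cos(k(\phi_m-\psi))$ in the expansion of $d_m(\psi)$, whose amplitudes are $O(r^2/\rho_m)$ --- one must argue, via the phase-modulation/generalized-Bessel bandwidth rule, that they broaden the spectrum only by $o(r/\lambda)$ provided the antennas are not too close to the scattering ring; and (iii) the genuinely near-field antennas ($\rho_m$ comparable to or smaller than $r$), which should be shown to form a vanishing fraction and hence to inflate the rank by at most $o(r)$. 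The simulations reported later are what ultimately confirm that the constant $4\pi/\lambda$ is not merely an upper bound but essentially tight.
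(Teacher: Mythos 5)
Your argument is correct in substance and reaches the right bound, but it takes a genuinely different route from the paper. You both begin identically: the i.i.d.\ path phases kill all cross terms, so $\mathbf{R}$ reduces to the expectation of a single rank-one outer product over the random scatterer position. From there the paper stays within its linear-array machinery: it first treats a \emph{line} of scatterers of length $\tilde L$ under the planar-wavefront assumption, factors out a diagonal unitary so the path vector has entries $e^{-j2\pi\tilde l\cos(\theta_m)/\lambda}$, embeds the $\cos(\theta_m)$ into an arbitrarily fine grid on $[-1,1]$ so that the dimension result inherited from \cite{yin2012jsac} (Lemma 1, as reused in Proposition \ref{propDim_nonCali}) gives $\dim \leq 2\tilde L/\lambda + o(\tilde L)$, and then views the ring as a union of segments of total length $2\pi r$ whose subspace dimensions at most add, yielding $4\pi r/\lambda + o(r)$. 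You instead keep the ring intact, use the far-field expansion $d_m(\psi)\approx\rho_m - r\cos(\phi_m-\psi)$, and expand each entry via Jacobi--Anger into azimuthal harmonics $\mathbf{v}_n$ weighted by $J_n(2\pi r/\lambda)$, truncating at $|n|\lesssim 2\pi r/\lambda + O(r^{1/3})$ by Bessel decay. Your route buys an explicit (approximate) spanning basis and a direct link to the compact-aperture degrees-of-freedom picture of \cite{kennedy2007}, whereas the paper's route recycles its ULA lemma and avoids Bessel asymptotics at the cost of the segment-decomposition and dimension-additivity step. The caveats you flag (effective rank versus exact rank after truncating the Bessel tail, higher-order terms in $d_m(\psi)$, near-field antennas) are real but do not put you below the paper's own level of rigor: its proof likewise assumes planar wavefronts, and its notion of rank is explicitly the number of eigenvalues above a threshold, which is exactly what your $\varepsilon$-rank reading of the truncation delivers.
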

\begin{proof}
\quad \emph{Proof:} See Appendix \ref{proof:theoremRankDAS}.
\end{proof}
%Note that the $\lesssim$ sign indicates an approximate inequality.
In reality we show below that the right hand side of (\ref{Eq:RankDAS}) is a very close approximation of the actual rank, which is defined as the number of eigenvalues of $\mathbf{R}$ which are greater than a prescribed threshold (in our simulations it is taken to be 10e-5).
Theorem \ref{theoremRankDAS} shows a linear dependency of the rank on the size of the scattering ring. %The interpretation is as follows: consider an extreme case when the channel is composed of pure line of sight, i.e., $r=0$. The channel now becomes a deterministic vector multiplied by a random scaler, which results in a rank-1 covariance matrix.
When $r$ increases, the richer scattering environment expands the dimension of signal space. %In the following section we will exploit this low-dimension property to design a simple spatial filter in order to reduce interference and increase the sum-rate of the network.
%${\left| {{\mathbf{h}}_{2q}^H{{\mathbf{h}}_{1p}}} \right|^2}$ follows a chi-squared distribution.
\begin{figure}[h]
  \centering
  \includegraphics[width=3.2in]{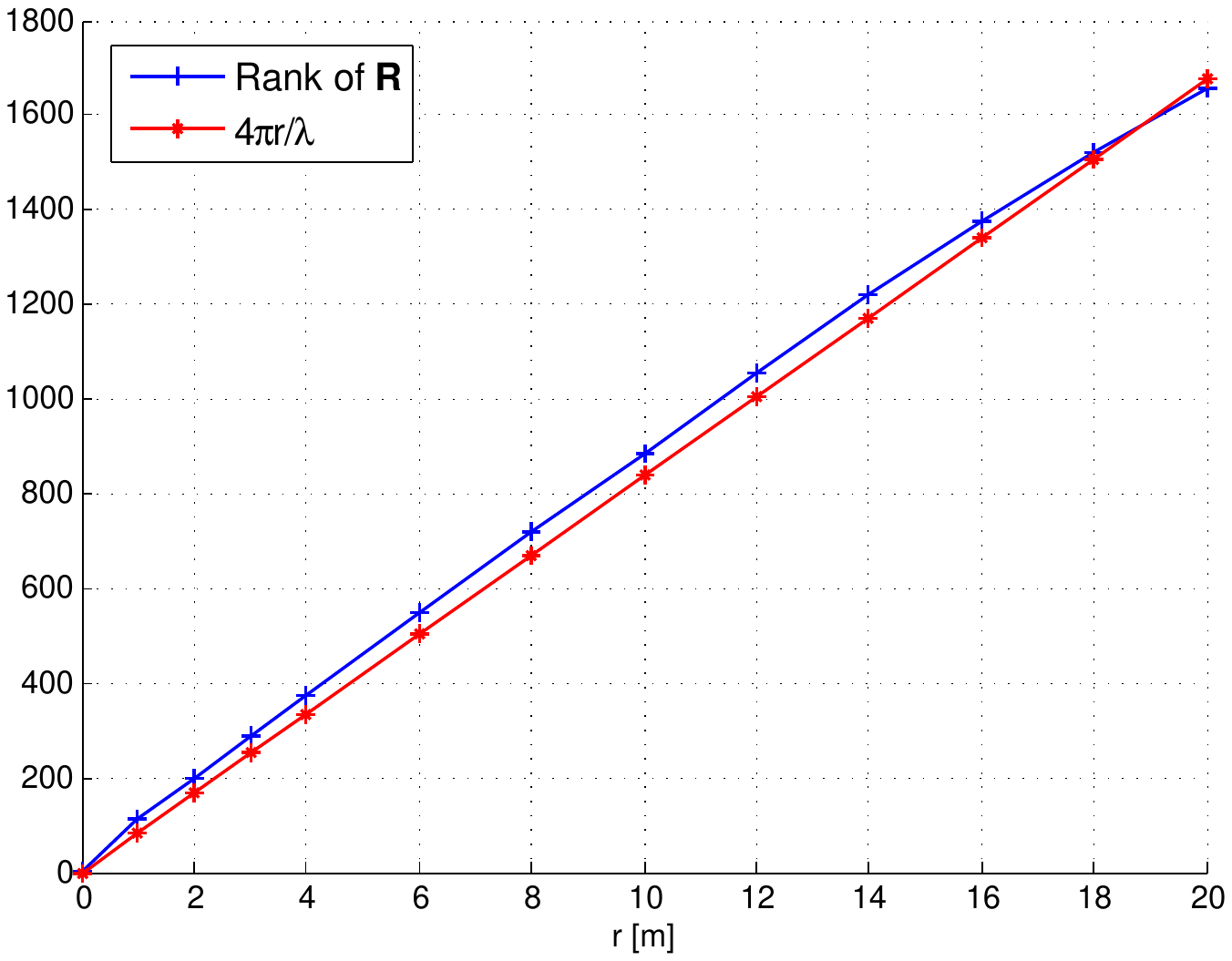}\\
  \caption{Rank vs. $r$, $M = 2000$, $\lambda$ = 0.15m, $L = 500$m.} \label{fig:rank_vs_r}
\end{figure}
Fig. \ref{fig:rank_vs_r} shows the behavior of the covariance rank with respect to the scattering radius $r$. %In simulations we obtain the channel covariance by averaging $\mathbf{h}\mathbf{h}^H$ over the random locations of the scatterers.
We can see the rank scales linearly with the slope $4\pi / \lambda$. However because of the finite number of antennas the rank will finally saturate towards $M$ when $r$ keeps increasing.

%Please note that we obtain the channel covariance by averaging $\mathbf{h}\mathbf{h}^H$ over the random locations of the scatterers. By doing so, we make the implicit assumption that the number of scatterers is infinite. If we consider another model where scatterers are fixed and the user can move anywhere inside the scattering ring, the rank will be upper-bounded by the smaller value among the number of scatterers $P$ and the expression (\ref{Eq:RankDAS}). In any case our result serves as an upper-bound.

\section{Spatial Interference Filtering}

We are now interested in characterizing the orthogonality (or correlation) between any two user channel vectors as a function of inter-user distance, the wavelength and the scattering radius $r$, in the large $M$ limit, as this will provide a measure of interference rejection capability for the distributed antenna systems. In the following we will investigate two interference filtering schemes: 1) the simplistic matched filtering, 2) a subspace projection filtering.

\subsection{Performance of Interference Filtering Using Matched Filter}
%Suppose the base stations have matched filter receiver, the performance is analyzed in the following.
We start with analyzing the channel correlation between two users who interfere each other.
We point out two distinct regimes, depending on whether the inter-user distance is small or large.

\subsubsection{Closely Spaced Users}
Closely spaced users are defined by the fact that the distance between user 1's and user 2's scatterers is small enough compared with the distance between scatterers and receiving antennas so that we can consider planar wavefronts. We first examine the correlation between any two scattering paths for user 1 and user 2, corresponding to user 1's $p$-th scatterer and user 2's $q$-th scatterer, with a distance $D_{pq}$.

\begin{proposition}\label{propBessel}
For $D_{pq}$ small enough that the two scatterers are located in the same planar wavefront region, we have
\begin{equation}\label{Eq:Bessel}
\mathop {\lim }\limits_{M \to \infty } \frac{{\left| {{\mathbf{h}}_{2q}^H{{\mathbf{h}}_{1p}}} \right|}}{{\left| {{{\mathbf{h}}_{1p}}} \right|\left| {{{\mathbf{h}}_{2q}}} \right|}} \approx \left| {{J_0}\left(\frac{{2\pi {D_{pq}}}}{\lambda }\right)} \right|,
\end{equation}
where ${J_0}$ is the zero-order Bessel function of the first kind.
\end{proposition}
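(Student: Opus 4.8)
The plan is to compute the inner product $\mathbf{h}_{2q}^H \mathbf{h}_{1p}$ directly from the path model in \eqref{Eq:chanPth}, exploiting the equal-path-loss normalization (so that $\beta_{kpm}\equiv 1$ and $|\mathbf{h}_{kp}|^2 = M$) and the planar-wavefront approximation. Under the closely-spaced assumption, the $M$ base station antennas are far from both scatterers relative to $D_{pq}$, so for each BS antenna $m$ the distances $d_{1pm}$ and $d_{2qm}$ differ (to first order) by the projection of the baseline vector joining the two scatterers onto the unit direction $\mathbf{u}_m$ pointing from the scatterer cluster to antenna $m$: $d_{2qm} - d_{1pm} \approx D_{pq}\cos(\psi_m)$, where $\psi_m$ is the angle between that baseline and $\mathbf{u}_m$. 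Plugging into the phase terms $e^{-j2\pi(d_{kpm}+r)/\lambda}$, the common phases $\varphi_{1p},\varphi_{2q}$ and the $+r$ offsets drop out of the normalized magnitude, and one is left with
\[
\frac{1}{M}\sum_{m=1}^M e^{\,j\frac{2\pi}{\lambda} D_{pq}\cos\psi_m}.
\]

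Next I would invoke the assumption, stated in the distributed-array model, that the $M$ antennas are uniformly and randomly scattered over the (disk-shaped) network. In the large-$M$ limit, by the law of large numbers this empirical average converges to an expectation over the angle $\psi$ seen from the scatterer position, and since the antennas surround the user more or less isotropically (the cell is large compared to $r$ and the scatterers sit near the cell center in the relevant regime), $\psi$ is effectively uniform on $[0,2\pi)$. Therefore the sum converges to $\frac{1}{2\pi}\int_0^{2\pi} e^{\,j z \cos\psi}\,d\psi$ with $z = 2\pi D_{pq}/\lambda$, which is exactly the integral representation $J_0(z)$ of the zero-order Bessel function of the first kind. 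Taking magnitudes and normalizing by $|\mathbf{h}_{1p}||\mathbf{h}_{2q}| = M$ yields \eqref{Eq:Bessel}; the "$\approx$" absorbs the first-order planar-wavefront truncation error, which is $O(D_{pq}^2/d)$ and hence negligible in the stated regime.

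The main obstacle is justifying the reduction of the antenna-average to a uniform angular integral: strictly, the antennas lie in a bounded disk of radius $L$ while the scatterer is at a generic interior point, so the induced distribution of $\psi_m$ is not exactly uniform, and one must argue that either (i) the correction is lower order (e.g. because the result is only claimed asymptotically and with an "$\approx$", or because one additionally averages over the user/scatterer location inside the disk, as the paper does elsewhere for the rank bound), or (ii) the two-dimensional isotropy of the scatterer ring and the far-field geometry wash out the anisotropy. I would also need to be careful that the planar-wavefront linearization $d_{2qm}-d_{1pm}\approx D_{pq}\cos\psi_m$ is applied consistently and that the neglected curvature terms do not accumulate over the $M$-term sum — this is where the "located in the same planar wavefront region" hypothesis does the real work, bounding the residual phase uniformly in $m$ so that the convergence to $J_0$ survives the sum. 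Everything else (the cancellation of $\varphi$, of the $r$-offset, and the normalization to $M$) is routine bookkeeping.
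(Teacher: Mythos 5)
Your route is essentially the paper's: linearize the distance difference as $d_{2qm}-d_{1pm}\approx D_{pq}\cos\psi_m$ under the planar-wavefront hypothesis, note that the common phases $\varphi_{1p},\varphi_{2q}$ and the $+r$ offsets cancel in the normalized magnitude, and identify the large-$M$ angular average with the integral representation $J_0(z)=\frac{1}{2\pi}\int_0^{2\pi}e^{jz\cos\psi}\,d\psi$. The paper does the same thing, merely discretizing the geometry into $N$ angular sectors around the scatterer pair instead of invoking the law of large numbers directly.

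The one genuine gap is your premise $\beta_{kpm}\equiv 1$ and $|\mathbf{h}_{kp}|^2=M$. The unit path-loss convention is adopted only in the rank analysis of Section III-B; in the interference-filtering section where this proposition sits, the path vectors (\ref{Eq:chanPth}) retain $\sqrt{\beta_{kpm}}$, and these weights vary strongly across the distributed antennas, so they cannot be normalized away. The paper keeps them: it assumes only $\beta_{2qm}\approx\beta_{1pm}$ (the two scatterers are much closer to each other than to any antenna), which turns the quantity of interest into the weighted average $\frac{\left|\sum_m \beta_{1pm}e^{j2\pi D_{pq}\cos\psi_m/\lambda}\right|}{\sum_m \beta_{1pm}}$, and then argues by the symmetry of the network that the aggregate path gain collected per angular sector is asymptotically independent of the sector, so the path-loss-weighted empirical angular measure becomes uniform and the $J_0$ integral follows. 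This is exactly the obstacle you flag at the end (non-uniform $\psi_m$ for a scatterer at a generic interior point), but it has to be settled for the weighted measure, not merely for the antenna directions; once you reinstate the $\beta$'s and make that isotropy assertion (the statement's ``$\approx$'' absorbs the residual anisotropy, as it does in the paper), your argument coincides with the paper's proof.
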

\begin{proof}
\quad \emph{Proof:} See Appendix \ref{proof:propBessel}.
\end{proof}
Note that the proof is based on an additional assumption that the path loss between a certain antenna and the two scatterers are approximately equal, i.e., if user 1 and user 2 are concerned, then ${\beta _{2qm}} \approx {\beta _{1pm}}$. Since the two scatterers are very close and the antenna is much further away, this assumption is reasonable in practice.
To visualize Proposition \ref{propBessel}, we draw the curves of $\frac{{\left| {{\mathbf{h}}_{2q}^H{{\mathbf{h}}_{1p}}} \right|}}{{\left| {{{\mathbf{h}}_{1p}}} \right|\left| {{{\mathbf{h}}_{2q}}} \right|}}$ and $|{J_0}(\frac{{2\pi {D_{pq}}}}{\lambda })|$ in Fig. \ref{fig:Correlation}.
\begin{figure}[h]
  \centering
  \includegraphics[width=3.2in]{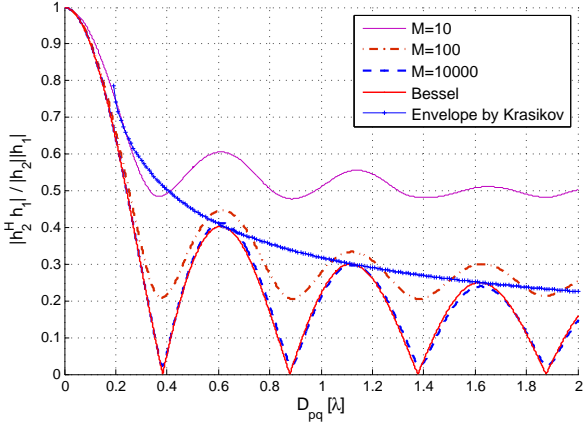}\\
  \caption{Illustration of Proposition \ref{propBessel}.} \label{fig:Correlation}
\end{figure}
The curves show that when $M$ grows, ${\left| {{\mathbf{h}}_{2q}^H{{\mathbf{h}}_{1p}}} \right|}/{\left| {{{\mathbf{h}}_{1p}}} \right|\left| {{{\mathbf{h}}_{2q}}} \right|}$ gets closer and closer to the Bessel function. The curve named ``Envelope by Krasikov" is an upper bound of the Bessel function developed by Krasikov in \cite{krasikov2006}, which will be used in Proposition \ref{propSIRBessel}.
The interpretation of Proposition \ref{propBessel} is as follows: the highest correlation between two scattering paths is attained when the spacing between the two scatterers is very small. Since the Bessel function reaches its first zero around a spacing of $D_0=0.38 \lambda$, it implies that the users ought to be located at least $2r+D_0 \approx 2r$ away from each other to allow for a reuse of spectral resources (such as pilots). In practice we would intend to schedule users with additional spacing than just $2r$, according to the side lobes of the Bessel function.
Note that our model is similar to Clarke's model \cite{clarke1968}, which indicates if the AOA is uniformly distributed from 0 to $2\pi$, the autocorrelation of a moving mobile is a scaled Bessel function.

We use Proposition \ref{propBessel} to derive a lower bound on the signal to interference ratio (SIR) under a simplified system setting with just two users (one desired, one interferer) and a matched filter receiver\footnote{With more users, the interference is simply scaled by the number of users. Additionally more advanced receivers could also be exploited.}. By computing the expectation over the random BS antenna locations, which is also how we derive the expectation of SIRs and channel correlations in the rest of this section, we obtain the following bound of SIR:
\begin{proposition}\label{propSIRBessel}
Assume perfect channel estimation, closely located users, whose scattering rings do not overlap. The expected SIR at the matched filter output satisfies:
%\begin{equation}\label{Eq:SIRLowerBound}
$$ \mathbb{E}\{ \text{SIR}\} \gtrsim { {\frac{{\pi \left( {{{\left( {4{{(\frac{{2\pi {(D_{u }-2r)}}}{\lambda })}^2} - 3} \right)}^{\frac{3}{2}}} - 3} \right)}}{{16{{(\frac{{2\pi {(D_{u }-2r)}}}{\lambda })}^2} - 20}}} }, \text{ when } M \text{ is large,}$$
%$$\mathop {\lim }\limits_{M \to \infty } \mathbb{E}\{ \text{SIR}\} \gtrsim { {\frac{{\pi \left( {{{\left( {4{{(\frac{{2\pi {(D_{u }-2r)}}}{\lambda })}^2} - 3} \right)}^{\frac{3}{2}}} - 3} \right)}}{{16{{(\frac{{2\pi {(D_{u }-2r)}}}{\lambda })}^2} - 20}}} }, $$
where $D_{u}$ is the distance between the two users and is assumed to be larger than ${{\sqrt {3 + {3^{\frac{2}{3}}}} \lambda }}/{{(4\pi) }} + 2r$.
\end{proposition}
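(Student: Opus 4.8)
\emph{Proof proposal.}
The plan is to express the matched-filter SIR as a ratio of channel inner products, to convert the desired expectation lower bound into an upper bound via Jensen's inequality, to evaluate the resulting interference term in the large-$M$ regime using the i.i.d.\ scattering phases together with Proposition~\ref{propBessel}, and finally to bound it uniformly with Krasikov's envelope bound on the zero-order Bessel function \cite{krasikov2006}. Concretely, with matched filter $\mathbf{w}=\mathbf{h}_1/\|\mathbf{h}_1\|$ the output desired and interfering powers are proportional to $\|\mathbf{h}_1\|^2$ and $|\mathbf{h}_1^H\mathbf{h}_2|^2/\|\mathbf{h}_1\|^2$, so $\mathrm{SIR}=\|\mathbf{h}_1\|^4/|\mathbf{h}_1^H\mathbf{h}_2|^2$. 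Since $t\mapsto 1/t$ is convex on $(0,\infty)$, Jensen's inequality gives $\mathbb{E}\{\mathrm{SIR}\}\ge 1/\mathbb{E}\{\mathrm{ISR}\}$ with $\mathrm{ISR}\triangleq|\mathbf{h}_1^H\mathbf{h}_2|^2/\|\mathbf{h}_1\|^4$, so it is enough to upper bound $\mathbb{E}\{\mathrm{ISR}\}$, where the expectation is over the path phases $\varphi_{kp}$, the BS antenna positions, and the scatterer positions.

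Next I would expand $\mathbf{h}_k=\tfrac1{\sqrt P}\sum_p\mathbf{h}_{kp}$, writing $\mathbf{h}_{kp}=\tilde{\mathbf{h}}_{kp}e^{j\varphi_{kp}}$ with $\tilde{\mathbf{h}}_{kp}$ independent of the phases. For closely located users the per-antenna path losses towards the two users' scatterers agree to leading order, so $\|\mathbf{h}_1\|^2\approx\|\mathbf{h}_2\|^2$ and $\mathrm{ISR}$ is well approximated by the squared channel correlation $\rho^2=|\mathbf{h}_1^H\mathbf{h}_2|^2/(\|\mathbf{h}_1\|^2\|\mathbf{h}_2\|^2)$. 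Averaging the numerator double-sum over the i.i.d.\ uniform $\varphi_{kp}$ annihilates every cross product between distinct scatterer pairs, leaving $\mathbb{E}_\varphi\{|\mathbf{h}_1^H\mathbf{h}_2|^2\}=\tfrac1{P^2}\sum_{p,q}|\tilde{\mathbf{h}}_{1p}^H\tilde{\mathbf{h}}_{2q}|^2$; invoking Proposition~\ref{propBessel} path-by-path (legitimate since ``closely located'' puts every scatterer pair in a common planar-wavefront region) yields, as $M\to\infty$,
\[
\mathbb{E}\{\mathrm{ISR}\}\;\lesssim\;\mathbb{E}\Big\{\tfrac1{P^2}\textstyle\sum_{p,q}J_0^2\!\big(\tfrac{2\pi D_{pq}}{\lambda}\big)\Big\},
\]
where $D_{pq}$ is the distance between user~1's $p$-th and user~2's $q$-th scatterer and the remaining expectation is over the scatterer positions.

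To close the argument I would bound the right-hand side uniformly. Since the two scattering rings have radius $r$, are centered a distance $D_u$ apart, and do not overlap, the triangle inequality gives $D_{pq}\ge D_u-2r>0$ for every $p,q$. Writing $x=\tfrac{2\pi(D_u-2r)}{\lambda}$, the hypothesis $D_u>\tfrac{\sqrt{3+3^{2/3}}\,\lambda}{4\pi}+2r$ is exactly $x>\tfrac12\sqrt{3+3^{2/3}}$, which is precisely the regime where Krasikov's inequality $J_0^2(y)\le\tfrac{4(4y^2-5)}{\pi((4y^2-3)^{3/2}-3)}$ is valid and decreasing in $y$; monotonicity then lets me replace every $D_{pq}$ by the worst case $D_u-2r$, collapsing the normalized double sum to a single term and giving $\mathbb{E}\{\mathrm{ISR}\}\lesssim \tfrac{4(4x^2-5)}{\pi((4x^2-3)^{3/2}-3)}$. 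Taking reciprocals and simplifying $4(4x^2-5)=16x^2-20$ produces the stated lower bound on $\mathbb{E}\{\mathrm{SIR}\}$.

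The step I expect to be the real obstacle is the large-$M$ evaluation: rigorously interchanging the $M\to\infty$ limit with the expectations and the $O(P^2)$-term sum, and in particular controlling $\|\mathbf{h}_1\|^2$, whose own inter-scatterer cross-terms do \emph{not} vanish relative to its mean even for large $M$, so that the reductions $\mathrm{ISR}\approx\rho^2$ and ``denominator $\approx$ its phase-average'' are only first-order. A lesser point is to confirm that Krasikov's bound is monotone decreasing on all of $[x,\infty)$, which guarantees that $D_u-2r$ is indeed the worst case; the remainder is routine algebra.
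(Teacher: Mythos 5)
Your proposal is correct and follows essentially the same route as the paper's own proof: bound the channel correlation via Proposition~\ref{propBessel}, use the monotonically decreasing Bessel envelope together with $D_{pq}\ge D_u-2r$ to reduce to the worst-case scatterer separation, and then invoke Krasikov's envelope bound, whose validity range gives exactly the stated condition on $D_u$. Your explicit Jensen step ($\mathbb{E}\{\mathrm{SIR}\}\ge 1/\mathbb{E}\{\mathrm{ISR}\}$) and the phase-averaged per-path decomposition are just slightly more careful formalizations of what the paper asserts directly at the $\gtrsim$ level, so no substantive difference.
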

\begin{proof}
\quad \emph{Proof:} When applying matched filter, we have
$$ \mathbb{E}\{ \text{SIR}\} =  \mathbb{E} \frac{{{{\left| {{{\mathbf{h}}_{1}}} \right|}^2}{{\left| {{{\mathbf{h}}_{2}}} \right|}^2}}}{{{{\left| {{\mathbf{h}}_{2}^H{{\mathbf{h}}_{1}}} \right|}^2}}}.$$
%$$\mathop {\lim }\limits_{M \to \infty } \mathbb{E}\{ \text{SIR}\} = \mathop {\lim }\limits_{M \to \infty } \mathbb{E} \frac{{{{\left| {{{\mathbf{h}}_{1}}} \right|}^2}{{\left| {{{\mathbf{h}}_{2}}} \right|}^2}}}{{{{\left| {{\mathbf{h}}_{2}^H{{\mathbf{h}}_{1}}} \right|}^2}}}.$$
Let us recall that the envelope of the Bessel function ${J_0}(\frac{{2\pi D}}{\lambda })$ is decreasing with $D$. Thus, the lower bound of SIR is obtained by considering the shortest $D_{pq}$, which gives the worst case correlation. Since $D_{pq} \geq D_u - 2r$, we may obtain when $M$ is large:
$$\mathbb{E}\{ \text{SIR}\} \gtrsim \frac{1}{\left| {{J_0}(\frac{{2\pi {(D_{u}-2r)}}}{\lambda })} \right|^2}.$$
Finally we use an upper bound of the envelope of the Bessel function \cite{krasikov2006} which has validity when $D_u -2r > {{\sqrt {3 + {3^{\frac{2}{3}}}} \lambda }}/{{(4\pi) }}$. The bounding argument of the Bessel function in \cite{krasikov2006} can directly apply here.
\end{proof}
The above proposition quantifies the rate at which the SIR increases with the inter-user distance, in this case linearly.

\subsubsection{Distant Users}
%\subsection{Performance of Interference Filtering Using Matched Filter}
We consider the regime in which users are located further away from each other, e.g., many wavelengths away. The planar wavefront assumption no longer holds, making the use of the Bessel function impractical.
%However in the large spacing regime it is reasonable to assume the phases of incoming path at a given base station are uncorrelated between themselves and independent of the path loss. Based on this assumption,
In this case we are again interested in characterizing the correlation between two scattering paths corresponding to two users, then the correlation between the channel vectors themselves.

We first investigate the behavior of ${{\mathbf{h}}_{2q}}^H{{\mathbf{h}}_{1p}}$ for any $p$, $q$:
\begin{equation}\label{Eq:h2h1}
{{\mathbf{h}}_{2q}}^H{{\mathbf{h}}_{1p}} = \sum\limits_{m = 1}^M {\left( {h_{2qm}^*{h_{1pm}}} \right)} ,
\end{equation}
where $h_{kpm}$ is the channel between the $k$-th user and the $m$-th BS antenna via the $p$-th scatterer:
\begin{equation}
h_{kpm} = \sqrt{\beta _{kpm}}{e^{ - j2\pi \frac{{{d_{kpm}} + r}}{\lambda }}} e^{j\varphi_{kp}}.
\end{equation}
%\begin{equation*}
%h_{2qm}^*h_{1pm} =  {\sqrt{\beta _{2qm}}{e^{j2\pi \frac{{{d_{2qm}} + r}}{\lambda }}} \sqrt{\beta _{1pm}}{e^{ - j2\pi \frac{{{d_{1pm}} + r}}{\lambda }}}} .
%\end{equation*}
Since the two phases $e^{j\varphi_{1p}}$ and $e^{j\varphi_{2q}}$ are independent, we have:
$$\mathbf{E}(h_{2qm}^*h_{1pm})=0,$$
and the variance of $h_{2qm}^*h_{1pm}$ is
\begin{equation}\label{Eq:Var_h2h1}
\sigma^2(D_{pq}) \triangleq \text{Var}(h_{2qm}^*h_{1pm})=\mathbf{E}(\beta _{2qm}\beta _{1pm}).
\end{equation}
Given the random network model with radius $L$, the path loss correlation can be found by integration over polar coordinates giving the location of the BS antennas. Although a closed-form expression is elusive, we get the following computable expression:
\begin{align}\label{Eq:varianceDerivation}
& \sigma^2(D) = \nonumber \\
& \frac{{2{\alpha}}}{{\pi {L^2}}}\int_0^L {d\rho \int_0^\pi  {\frac{\rho }{{{{\left( {\rho  + r} \right)}^\gamma }{{\left( {\sqrt {D^2 + {\rho ^2} - 2\rho {D}\cos (\varphi )}  + r} \right)}^\gamma }}}} } d\varphi
\end{align}
One example of the path loss correlation $\sigma^2(D)$ is given in Fig. \ref{fig:SigmaSquare}, which shows $\sigma^2(D)$ is a decreasing function of $D$.
\begin{figure}[h]
  \centering
  \includegraphics[width=3.2in]{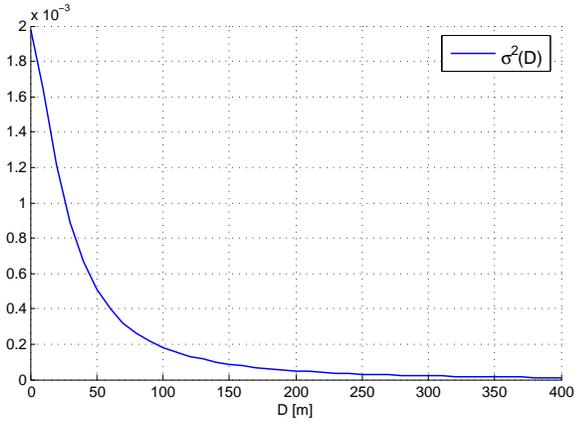}\\
  \caption{An illustration of $\sigma^2(D)$ with $\alpha = 10^7$, $L = 500$, $r = 15$, and $\gamma = 2.5.$} \label{fig:SigmaSquare}
\end{figure}
%{\color{red} \uline{TODO: plot $\sigma^2(D)$}}
%If $M$ antennas are randomly and independently located within a network, then w
We have the following proposition on the distribution of $| {{\mathbf{h}}_{2q}^H{{\mathbf{h}}_{1p}}} |^2$:
\begin{proposition}\label{propChi-square}
Let $\chi$ be a random variable exponentially distributed with parameter $\lambda=2$. Asymptotically for $M \rightarrow +\infty$
\begin{equation}
\frac{| {\mathbf{h}}_{2q}^H {\mathbf{h}}_{1p}|^2 }{\sigma^2(D_{pq})M }
\overset{d}{\rightarrow} \chi
\end{equation}
where $\overset{d}{\rightarrow}$ denotes convergence in distribution.
\end{proposition}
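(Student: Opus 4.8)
\textbf{Proof proposal for Proposition~\ref{propChi-square}.}

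The plan is to recognize $\mathbf{h}_{2q}^H \mathbf{h}_{1p}$ as a sum of $M$ independent (across the antenna index $m$), zero-mean, unit-free random variables and to apply a central limit theorem, then identify the limiting law of the squared magnitude. Writing $Z_M \triangleq \mathbf{h}_{2q}^H \mathbf{h}_{1p} = \sum_{m=1}^M h_{2qm}^* h_{1pm}$, the terms $X_m \triangleq h_{2qm}^* h_{1pm}$ are independent over $m$ because the BS antenna positions $d_{2qm},d_{1pm}$ (equivalently the polar coordinates of antenna $m$) are i.i.d.\ across $m$; moreover $\mathbf{E}(X_m)=0$ by the independence of the common phases $e^{j\varphi_{1p}}$ and $e^{j\varphi_{2q}}$ established just before the statement, and $\mathrm{Var}(X_m)=\mathbf{E}(\beta_{2qm}\beta_{1pm})=\sigma^2(D_{pq})$ as in~(\ref{Eq:Var_h2h1}). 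First I would argue that $X_m$, viewed as a complex random variable, is \emph{circularly symmetric}: conditioning on the antenna location and on the two path-loss magnitudes, the phase of $X_m$ is $\varphi_{1p}-\varphi_{2q}$ plus a deterministic term, and $\varphi_{1p}-\varphi_{2q}\bmod 2\pi$ is uniform on $[0,2\pi)$, so $e^{i\psi}X_m \overset{d}{=} X_m$ for any fixed $\psi$. Hence $\mathbf{E}(X_m^2)=0$ (the ``pseudo-variance'' vanishes), which is exactly the condition needed for the complex CLT to produce a proper (circularly symmetric) complex Gaussian limit rather than an elliptical one.

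Next I would invoke the Lindeberg--Feller central limit theorem for triangular arrays of independent complex random variables: with $s_M^2 = \sum_{m=1}^M \mathrm{Var}(X_m) = M\,\sigma^2(D_{pq})$, provided the Lindeberg condition holds one gets $Z_M / s_M \overset{d}{\rightarrow} \mathcal{CN}(0,1)$, i.e.\ $Z_M/\sqrt{M\sigma^2(D_{pq})}$ converges to a standard circularly symmetric complex Gaussian. Here I need to check the Lindeberg (or the simpler Lyapunov) condition; this is where a small amount of care is required, since $\beta_{kpm}$ can become large when an antenna falls very close to a scatterer, so the fourth moment $\mathbf{E}(|X_m|^4)=\mathbf{E}(\beta_{2qm}^2\beta_{1pm}^2)$ must be shown finite. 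With path-loss exponent $\gamma$ and the $(d+r)^{-\gamma}$ law, the singularity at $d=0$ is integrable in the plane for $\gamma<2$ after accounting for the $\rho\,d\rho$ Jacobian; one can also simply note that, because the two rings do not overlap, $d_{2qm}$ and $d_{1pm}$ cannot be simultaneously small, and the $+r$ offset in~(\ref{Eq:pathloss}) keeps every term bounded when $r>0$ regardless. In either case the Lyapunov ratio $\sum_m \mathbf{E}|X_m|^4 / s_M^4 = O(1/M)\to 0$, which suffices.

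Finally, once $Z_M/\sqrt{M\sigma^2(D_{pq})} \overset{d}{\rightarrow} G$ with $G\sim\mathcal{CN}(0,1)$, the continuous mapping theorem gives $|Z_M|^2/(M\sigma^2(D_{pq})) \overset{d}{\rightarrow} |G|^2$, and $|G|^2$ for a standard circularly symmetric complex Gaussian is exponentially distributed with mean $1$, i.e.\ with rate parameter $2$ in the convention used in the statement (real and imaginary parts each $\mathcal{N}(0,1/2)$, so $|G|^2$ is $\tfrac12\chi^2_2$, an $\mathrm{Exp}(\text{rate }2)$ variable; equivalently $\mathbf{E}|G|^2=1$). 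This matches the claimed limit $\chi$. The main obstacle I anticipate is not the CLT machinery itself but justifying the moment/Lindeberg bound cleanly in the presence of the path-loss singularity, and being explicit that the relevant randomness is over antenna positions (the expectation in $\sigma^2(D)$), so that the array of summands is genuinely i.i.d.\ and the common per-term variance $\sigma^2(D_{pq})$ is the right normalization; a secondary subtlety worth a sentence is confirming circular symmetry so that the limit is a \emph{proper} complex Gaussian and hence $|G|^2$ is exactly exponential rather than a more general quadratic form.
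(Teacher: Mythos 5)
Your proposal follows essentially the same route as the paper's own proof: the paper likewise views $\mathbf{h}_{2q}^H\mathbf{h}_{1p}$ as a zero-mean sum over the $M$ antennas, invokes the law of large numbers and the CLT to get a complex Gaussian limit for the normalized sum, and concludes by continuous mapping that the squared modulus converges to an exponential law, so your extra care with the Lyapunov/Lindeberg condition (handled by the $+r$ offset in the path loss) and with circular symmetry merely fills in details the paper leaves implicit. One small internal slip: an exponential variable with mean $1$, i.e.\ $|G|^2=\tfrac12\chi^2_2$ for $G\sim\mathcal{CN}(0,1)$, has rate parameter $1$, not $2$, so your parenthetical identification of it as an exponential with rate $2$ contradicts your (correct) statement that $\mathbb{E}|G|^2=1$; the mismatch with the proposition's ``parameter $\lambda=2$'' is a matter of the paper's parameter convention rather than of your limiting distribution.
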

\begin{proof}
\quad \emph{Proof:}
Let us consider the random variable $S_M=\frac{ {\mathbf{h}}_{2q}^H {\mathbf{h}}_{1p} }{\sigma(D_{pq})M}. $ By the law of large numbers $S_M$ converges almost surely to $0$ as $M \mapsto +\infty$. By appealing to the Central Limit Theorem (CLT), the random variable $\sqrt{M}S_M$ converges in distribution to a complex Gaussian distribution with zero mean and unit variance. Thus, its square, i.e. $M |S_M|^2$, converges also in distribution to an exponentially distributed random variable with parameter $\lambda=2$.
%According to Central Limit Theorem (CLT) and (\ref{Eq:Var_h2h1}), asymptotically for $M \rightarrow +\infty,$
%\begin{equation}
%\frac{ {\mathbf{h}}_{2q}^H {\mathbf{h}}_{1p} }{\sigma^2(D_{pq})\sqrt{M} }
%\overset{d}{\rightarrow} x
%\end{equation}
%where $x$ is a zero mean complex Gaussian variable with unit variance.
\end{proof}

We now derive a lower bound of the average SIR for a two-user system, assuming  matched filtering receiver. % at the base station.
\begin{proposition}\label{propSIR_LB}
The lower bound is given by
\begin{equation}\label{Eq:SIR_LB}
\mathbb{E}\{ \text{SIR}\}  \geqslant \frac{{M(C^2 + o(1))}}{{{\sigma ^2}({D_u} - 2r)}},
\end{equation}
where $C$ is a constant such that ${\mathbf{h}}_1^H{{\mathbf{h}}_1}/M = C + o(1)$.
\end{proposition}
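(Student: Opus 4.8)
The plan is to combine the distributional result of Proposition \ref{propChi-square} with the asymptotic normalization $\mathbf{h}_k^H \mathbf{h}_k / M = C + o(1)$ to pass to expectations, and then to isolate the worst-case scattering-path distance so as to reduce everything to the single-scatterer variance $\sigma^2(\cdot)$. First I would write the matched-filter SIR explicitly as
\begin{equation*}
\text{SIR} = \frac{|\mathbf{h}_1^H \mathbf{h}_1|^2 \, |\mathbf{h}_2|^2}{|\mathbf{h}_2^H \mathbf{h}_1|^2} \cdot \frac{1}{|\mathbf{h}_2|^2} \;=\; \frac{|\mathbf{h}_1^H \mathbf{h}_1|^2}{|\mathbf{h}_2^H \mathbf{h}_1|^2},
\end{equation*}
and then expand the denominator using $\mathbf{h}_k = \frac{1}{\sqrt{P}}\sum_p \mathbf{h}_{kp}$, so that $\mathbf{h}_2^H \mathbf{h}_1 = \frac{1}{P}\sum_{p,q} \mathbf{h}_{2q}^H \mathbf{h}_{1p}$. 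Since the common phases $e^{j\varphi_{kp}}$ are i.i.d.\ uniform and independent across the two users, the cross terms $\mathbf{h}_{2q}^H \mathbf{h}_{1p}$ are uncorrelated with one another, so $\mathbb{E}\{|\mathbf{h}_2^H \mathbf{h}_1|^2\} = \frac{1}{P^2}\sum_{p,q} \mathbb{E}\{|\mathbf{h}_{2q}^H \mathbf{h}_{1p}|^2\}$. By Proposition \ref{propChi-square} each summand is $\sigma^2(D_{pq}) M (1 + o(1))$ (the mean of the limiting exponential with parameter $2$ being $1/2$, up to the scaling convention used there — I would just carry the constant through consistently).

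Next I would use monotonicity of $\sigma^2(\cdot)$, established by the discussion around Fig.\ \ref{fig:SigmaSquare}: $\sigma^2(D)$ is decreasing in $D$, and since the scattering rings have radius $r$ and do not overlap, every inter-scatterer distance obeys $D_{pq} \geq D_u - 2r$. Hence $\sigma^2(D_{pq}) \leq \sigma^2(D_u - 2r)$ for all $p,q$, which gives the bound $\mathbb{E}\{|\mathbf{h}_2^H \mathbf{h}_1|^2\} \leq \sigma^2(D_u - 2r)\, M\,(1 + o(1))$. For the numerator, $\mathbf{h}_1^H \mathbf{h}_1 = M(C + o(1))$ gives $|\mathbf{h}_1^H \mathbf{h}_1|^2 = M^2 (C^2 + o(1))$. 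Dividing — and here I would invoke Jensen's inequality in the form $\mathbb{E}\{X/Y\} \geq \mathbb{E}\{X\}/\mathbb{E}\{Y\}$ for positive $X$, $Y$ only after checking that the numerator concentrates, so that $\mathbb{E}\{X/Y\} \gtrsim (\text{const})\,\mathbb{E}\{X\}/\mathbb{E}\{Y\}$ suffices — yields
\begin{equation*}
\mathbb{E}\{\text{SIR}\} \;\gtrsim\; \frac{M^2(C^2 + o(1))}{\sigma^2(D_u - 2r)\, M\,(1+o(1))} \;=\; \frac{M(C^2 + o(1))}{\sigma^2(D_u - 2r)},
\end{equation*}
which is exactly (\ref{Eq:SIR_LB}).

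The main obstacle is the exchange of expectation and the ratio: SIR is a ratio of dependent random quantities, and a naive application of $\mathbb{E}\{1/Z\} \geq 1/\mathbb{E}\{Z\}$ on the denominator alone is not automatically legitimate because the numerator $|\mathbf{h}_1^H\mathbf{h}_1|^2$ is itself random and correlated with the denominator. The clean way around this is to note that $\mathbf{h}_1^H \mathbf{h}_1 / M \to C$ almost surely (law of large numbers over the $M$ antenna terms, using the unit-path-loss normalization and boundedness of the summands), so the numerator is deterministic to leading order and can be pulled out of the expectation up to a $1+o(1)$ factor; what remains is $\mathbb{E}\{1/|\mathbf{h}_2^H\mathbf{h}_1|^2\}$, to which convexity of $t \mapsto 1/t$ applies directly. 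One should also be slightly careful that $|\mathbf{h}_2^H \mathbf{h}_1|^2$ can in principle be small, so a rigorous treatment would truncate away the vanishing-probability event where the CLT approximation breaks down; since the limiting law is exponential and has a finite negative moment only marginally, I expect the cleanest statement is the asymptotic inequality ``$\gtrsim$'' already used in the proposition, which tolerates such lower-order corrections.
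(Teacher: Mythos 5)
Your proposal is correct and follows essentially the same route as the paper: expand $\mathbf{h}_2^H\mathbf{h}_1$ into the $P^2$ path-pair cross terms, use the independence of the random phases to decorrelate them, bound each pair's second moment by $M\sigma^2(D_u-2r)$ via monotonicity of $\sigma^2(\cdot)$ and $D_{pq}\ge D_u-2r$, treat the numerator as the deterministic $M^2(C^2+o(1))$, and pass to the expectation of the ratio (a step the paper leaves implicit and you handle more carefully via concentration plus convexity). The only cosmetic difference is that you evaluate $\mathbb{E}\{|\mathbf{h}_{2q}^H\mathbf{h}_{1p}|^2\}$ through the distributional limit of Proposition~\ref{propChi-square}, whereas the paper uses the direct variance computation $\operatorname{Var}(\mathbf{h}_{2q}^H\mathbf{h}_{1p}/M)=\sigma^2(D_{pq})/M$, which is cleaner since it sidesteps the uniform-integrability and exponential-mean-constant caveats you yourself flag.
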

\begin{proof}
\quad \emph{Proof:} The SIR can be written as:
\begin{equation}
\text{SIR} = \frac{{{{\left| {{\mathbf{h}}_1^H{{\mathbf{h}}_1}} \right|}^2}}}{{{{\left| {{\mathbf{h}}_2^H{{\mathbf{h}}_1}} \right|}^2}}} = \frac{{\frac{{{{\left| {{\mathbf{h}}_1^H{{\mathbf{h}}_1}} \right|}^2}}}{{{M^2}}}}}{{\frac{{{{\left| {{\mathbf{h}}_2^H{{\mathbf{h}}_1}} \right|}^2}}}{{{M^2}}}}} = \frac{{{C^2} + o(1)}}{\frac{{{{\left| {{\mathbf{h}}_2^H{{\mathbf{h}}_1}} \right|}^2}}}{{{M^2}}}}.
\end{equation}
%Since
%$$\mathbb{E}\left\{ {{{\left| {{\mathbf{h}}_2^H{{\mathbf{h}}_1}} \right|}^2}} \right\} = \mathbb{E}\left\{ {\frac{1}{{{P^2}}}{{\left| {{{\left( {\sum\limits_{q = 1}^P {{{\mathbf{h}}_{2q}}} } \right)}^H}\left( {\sum\limits_{p = 1}^P {{{\mathbf{h}}_{1p}}} } \right)} \right|}^2}} \right\}$$
Note that $\forall p,q$, ${{\mathbf{h}}_{2q}^H}  {{\mathbf{h}}_{1p}} / M$ has zero mean and a variance of ${{\sigma ^2}({D_{pq}})}/{M}$. In addition, the two variables ${{\mathbf{h}}_{2q}^H}  {{\mathbf{h}}_{1p}} / M$ and ${{\mathbf{h}}_{2q'}^H}  {{\mathbf{h}}_{1p'}} / M$ are uncorrelated for any $p' \neq p$ or $q' \neq q$, resulting from the random and independent phases in (\ref{Eq:chanPth}).
\begin{align*}
\mathbb{E}\left\{ \frac{{{\left| {{\mathbf{h}}_2^H{{\mathbf{h}}_1}} \right|}^2}}{M^2} \right\} &= \mathbb{E}\left\{ {\frac{1}{{{P^2}}}{{\left| {{ \sum\limits_{q = 1}^P {\sum\limits_{p = 1}^P  \frac{{\mathbf{h}}_{2q}^H {{\mathbf{h}}_{1p}}}{M} }}  } \right|}^2}} \right\} \\
 &= \frac{1}{P^2} \text{Var}({{ \sum\limits_{q = 1}^P {\sum\limits_{p = 1}^P  \frac{{\mathbf{h}}_{2q}^H {{\mathbf{h}}_{1p}}}{M} }}  }) \\
 &= \frac{1}{P^2} {{ \sum\limits_{q = 1}^P {\sum\limits_{p = 1}^P  \text{Var} \left( \frac{{\mathbf{h}}_{2q}^H {{\mathbf{h}}_{1p}}}{M}\right) }}  } \\
 &= \frac{1}{P^2} { \sum\limits_{q = 1}^P {\sum\limits_{p = 1}^P  \frac{\sigma^2(D_{pq})}{M} }} \\
 &\leq \frac{\sigma^2(D_u - 2r)}{M}.
\end{align*}
The final step is due to the fact that $D_{pq} \geq D_u - 2r, \forall p,q$. Finally we get
$$\mathbb{E}\{ \text{SIR}\}  \geqslant \frac{{M(C^2 + o(1))}}{{{\sigma ^2}({D_u} - 2r)}},$$
and Proposition \ref{propSIR_LB} is proven.
\end{proof}
The above result suggests at which rate the matched filtered interference decays as a function of $M$ and the inter-user spacing $D_u$, which in turn can be exploited to predict system performance and also give insights to the tolerable spatial reuse of pilot resources.
%\begin{figure}[h]
%  \centering
%  \includegraphics[width=3.2in]{}\\
%  \caption{Estimation MSE vs. distance between users.} \label{fig:MSE.vs.Distance}
%\end{figure}

%minimize the probability of estimation error
\subsection{Interference Filtering via Subspace Projection}\label{subspaceBF}
SIR analysis in previous sections is built upon a simple matched filter, which still requires an accurate channel estimation. In this section, however, we propose a simple beamforming strategy building on the low-dimensionality of the signal subspace, which does not require an accurate channel estimation.
We consider a $K$-user network with the first user being a target user and all other users being interference users. All these users share the same pilot sequence $\mathbf{s}$. %For the $k$-th user, the channel and channel covariance are ${{\mathbf{h}}_{k}}$ and $\mathbf{R}_k$ receptively.
Denote the sum of interference covariances as $\mathbf{R}_I = \mathbf{R}_2 + \mathbf{R}_3 + \cdots + \mathbf{R}_K$.
%We consider a network with just two users for ease of exposition, with the first user being a target user and the second user an interference user. The channels of user 1 and user 2 are ${{\mathbf{h}}_{1}}$ and ${{\mathbf{h}}_{2}}$, respectively. The channel covariance matrices are $\mathbf{R}_1$ and $\mathbf{R}_2$.
%From the previous results in section \ref{Sec:LowRankDAS}, $\mathbf{R}_I$ contains eigenvalues that are small enough to be neglected, as long as the number of users who share.
The eigenvalue decomposition of $\mathbf{R}_I$ is %\begin{equation}\label{Eq:EVD}
${\mathbf{R}_I=\mathbf{U}\mathbf{\Sigma}\mathbf{U}^{H}}$,
%\end{equation}
where $\mathbf{\Sigma}$ is a $M \times M$ diagonal matrix with the eigenvalues of $\mathbf{R}_I$ on its main diagonal. Suppose the eigenvalues are in descending order and the first $m$ eigenvalues are non-negligible while the others can be neglected.
We construct the spatial filter at the BS side for user 1 as:
\begin{equation}
\mathbf{W}_1 = \left[ \mathbf{u}_{m+1} | \mathbf{u}_{m+2}| \ldots | \mathbf{u}_M\right]^H,
\end{equation}
where $\mathbf{u}_{m}$ is the $m$-th column of $\mathbf{U}$.
We can assume approximately that:
$$\mathbf{W}_1 {\mathbf{h}}_{k} \approx \mathbf{0}, \forall k \ne 1,$$
\begin{equation}\label{Eq:train2Cell2}
{{\mathbf{W}_1\mathbf{Y}}} \approx \mathbf{W}_1{{\mathbf{h}}_{1}}{\mathbf{s}}^T + \mathbf{W}_1 {{\mathbf{N}}},
\end{equation}
where ${{\mathbf{N}}} \in {\mathbb{C}^{M \times \tau }}$ is the spatially and temporally white additive Gaussian noise, $\mathbf{Y} \in {\mathbb{C}^{M \times \tau }}$ is the received training signal, and ${\mathbf{s}} \in {\mathbb{C}^{\tau \times 1 }}$ is the shared pilot sequence.
Define the effective channel $\underline{\mathbf{h}}_1 \triangleq \mathbf{W}_1 {\mathbf{h}}_{1}$. Note that $\underline{\mathbf{h}}_1$ has a reduced size, which is $(M-m) \times 1$.
An LS estimate of $\underline{\mathbf{h}}_1$ is:
\begin{equation}
\underline{\widehat{\mathbf{h}}}_1 = \mathbf{W}_1 {\mathbf{Y}}{{\mathbf{s}}^*}{{({{\mathbf{s}}^T}{{\mathbf{s}}^*})}^{ - 1}},
\end{equation}
%where $\mathbf{Y} \in {\mathbb{C}^{M \times \tau }}$ is the received training signal at the base station, and ${\mathbf{s}}$ is the shared pilot sequence by the $K$ users.
The key idea is that channel estimate $\underline{\widehat{\mathbf{h}}}_1$ is coarse, yet it can be used as a modified MRC beamformer as it lies in a subspace orthogonal to the interference and is also aligned with the signal subspace of ${\mathbf{h}}_{1}$.
During uplink data transmission phase:
\begin{equation}\label{Eq:Ul2Cell}
%{{\mathbf{y}}} =  {{\mathbf{h}}_{1}}{\mathbf{s}}_1^T + {{\mathbf{h}}_{2}}{\mathbf{s}}_2^T + {{\mathbf{n}}}.
{{\mathbf{y}}} =  {{\mathbf{h}}_{1}}{\mathbf{s}}_1^T + \sum_{k=2}^{K}{{\mathbf{h}}_{k}}{\mathbf{s}}_k^T + {{\mathbf{n}}},
\end{equation}
%where the lengths of transmitted signal sequence ${\mathbf{s}}_1, {\mathbf{s}}_2, \cdots, {\mathbf{s}}_K$ are $\tau_u$.
where ${\mathbf{s}}_1, {\mathbf{s}}_2, \cdots, {\mathbf{s}}_K \in \mathbb{C}^{\tau_u \times 1}$ are the transmitted signal sequence. ${\mathbf{y}}, {\mathbf{n}} \in \mathbb{C}^{M \times \tau_u }$ are the received signal and noise respectively.
The subspace-based MRC beamformer is $\underline{\widehat{\mathbf{h}}}^H_1 \mathbf{W}_1$:
\begin{align}\label{Eq:subMRC}
\underline{\widehat{\mathbf{h}}}^H_1 \mathbf{W}_1 {{\mathbf{y}}} =  \underline{\widehat{\mathbf{h}}}^H_1 \underline{\mathbf{h}}_1 {\mathbf{s}}_1^T + & \underbrace{\underline{\widehat{\mathbf{h}}}^H_1 \mathbf{W}_1 {\sum_{k=2}^{K}{{\mathbf{h}}_{k}}{\mathbf{s}}_k^T}} + \underline{\widehat{\mathbf{h}}}^H_1 \mathbf{W}_1 {{\mathbf{n}}}. \\
& \quad \quad \quad \approx \mathbf{0} \nonumber
\end{align}
In case there is no null space for $\mathbf{R}_I$, e.g., the number of users $K$ is large or the interference users have rich scattering environments, the subspace-based method can still avoid the strong eigen modes of interference and therefore reject a good amount of interference.

Note that the subspace projection method has a certain similarity with \cite{muller2013arXiv} which also uses eigen-value decomposition in order to perform blind channel estimation. However there are two main differences: 1) They address only the case of classical massive arrays, not distributed antenna arrays; 2)They use the received power levels domain to separate desired channel and interfering channels. In our approach, the discrimination against interference is related to the phases with which the interference and desired signals arrive at the array. In fact, the two techniques could in principle be combined.

%Finally the proposed method can be easily generalized to more than two users. For example we can compute the covariance of the total interference then project onto the null space of that covariance matrix. In case there is no null space for such a matrix, e.g., the number of users is large or the interference users have rich scattering environments, the proposed method can still avoid the strong eigen modes of interference and therefore reject a good amount of interference.
% as long as the covariance matrix of total interference can be properly estimated.

\section{Numerical Results}\label{numericalResult}
%We consider both the channel estimation quality and the uplink sum-rate performance under the subspace-based modified matched filter, based on covariance information.
We first consider the channel estimation quality in a random network with radius $L=500$ meters. The path loss exponent $\gamma=2.5$. The scattering radius is $r=15$ meters. $P=50$ scatterers are randomly distributed in the scattering ring, which is centered at the user. Define the channel estimation MSE of the $k$-th user as:
 \begin{equation}\label{Eq:err}
 \text{MSE}_{k} \triangleq 10{\log _{10}}\left( {\frac{{ {\left\| {{{\widehat {\mathbf{h}}}_{k}} - {{\mathbf{h}}_{k}}} \right\|^2} }}{{ {\left\| {{{\mathbf{h}}_{k}}} \right\|^2} }}} \right).
\end{equation}
In the simulation we average the channel estimation MSE over different users in order to obtain MSE curve.

%We consider a random network with radius $L=500$ meters.
In Fig. \ref{fig:MSE.vs.Distance}, we assume the target user is located at the origin while an interfering user (they share the same pilot sequence) is moving over the horizontal axis at increasing distances from user 1. As we can observe, when the MMSE estimator (\ref{Eq:EstimatorDesired}) is used, the channel estimation error is a monotonous decreasing function of the distance between the desired user and the interference user. One may also notice the constant performance gap between LS and MMSE estimator in interference-free scenario, which indicates that covariance information is still helpful even in a highly distributed antenna system. As shown in the blue curve on the top, an LS estimator is unable to separate the desired channel and the interference channel. In contrast, an MMSE estimator has much better performance as its MSE is decreasing almost linearly with inter-user spacing, hence confirming our claims.

\begin{figure}[h]
  \centering
  \includegraphics[width=3.2in]{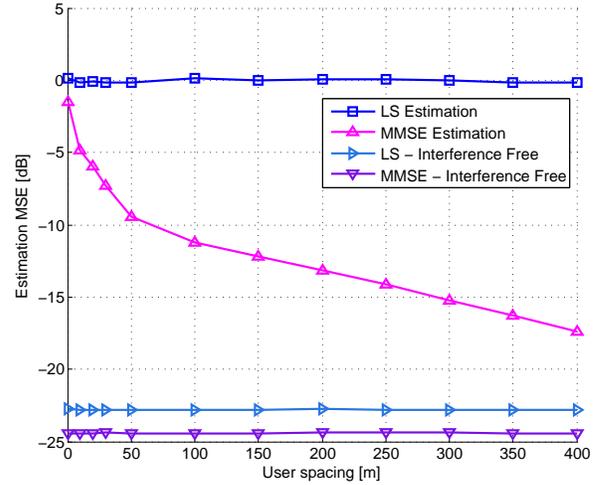}\\
  \caption{Estimation performance vs. distance between two users, $M = 2000$, $r=15$m, single-cell network.} \label{fig:MSE.vs.Distance}
\end{figure}

We then examine the performance of four MRC beamformers in terms of uplink sum-rate in a single-cell setting (Fig. \ref{fig:Sumrate.vs.Distance}) and per-cell rate in a multi-cell setting (Fig. \ref{fig:Rate.vs.Radius}).
The sum-rate is defined as follows:
\begin{equation}\label{Eq:sumrate}
\text{sum-rate} \triangleq { {\sum\limits_{k = 1}^K {{\text{log}}_2 (1  +  \text{SINR}_{k})}} },
\end{equation}
where $K$ is the number of simultaneously served users, and ${\text{SINR}}_{k}$ is the uplink signal-to-noise-plus-interference ratio (SINR) of the $k$-th user.

In Fig. \ref{fig:Sumrate.vs.Distance}, we show the performance of subspace-based MRC beamforming in a single-cell network where two users share the same pilot. The total number of distributed antennas is 500. In the figure ``LS + MRC" denotes the sum-rate performance of MRC beamforming using the LS channel estimate, while the curve ``MMSE + MRC" is the performance of MRC beamforming using the MMSE estimate (\ref{Eq:EstimatorDesired}). ``MMSE + MMSE" denotes the performance curve of MMSE beamforming using MMSE channel estimate when channel covariances (including the interference covariances) are assumed known during both channel estimation and signal detection. The simulation shows the simple subspace-based method has a very good performance.
%We truncate out $0.4M$ columns of $U$ corresponding to the greatest eigenvalues. The simulation shows the simple subspace-based method has a very good performance, much improving over  the conventional MMSE estimator as the latter is not able to deal with an inaccurate channel estimate.
Due to pilot contamination, the MRC beamformer using MMSE channel estimate is not as good as subspace-based method. The reason is that $\mathbf{R}_1$ and $\mathbf{R}_2$ generally have overlapping signal subspaces here. We may also notice that the subspace-based MRC beamformer has some slight performance gains over the MMSE beamformer.
\begin{figure}[h]
  \centering
  \includegraphics[width=3.2in]{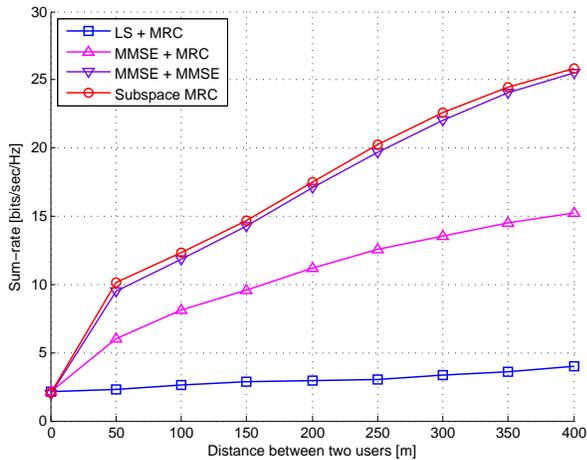}\\
  \caption{Uplink sum-rate vs. distance between 2 users, $M=500$, $r=15$m, cell-edge SNR 20dB, single-cell network.} \label{fig:Sumrate.vs.Distance}
\end{figure}

Fig. \ref{fig:Rate.vs.Radius} depicts the uplink per-cell rate achieved by the above-mentioned MRC beamformers as a function of scattering radius $r$. In the simulation we have 7 hexagonal cells with one center cell and 6 surrounding cells. Each cell has one user. All the users share the same pilot sequence. The per-cell rate is defined as the sum-rate (\ref{Eq:sumrate}) divided by the number of cells.
As can be seen, the subspace-based beamforming shows performance gains over other traditional MRC methods especially when the radius of scattering ring is smaller. It also shows more robustness than MMSE beamformer when the radius of the scattering ring is larger.
\begin{figure}[h]
  \centering
  \includegraphics[width=3.2in]{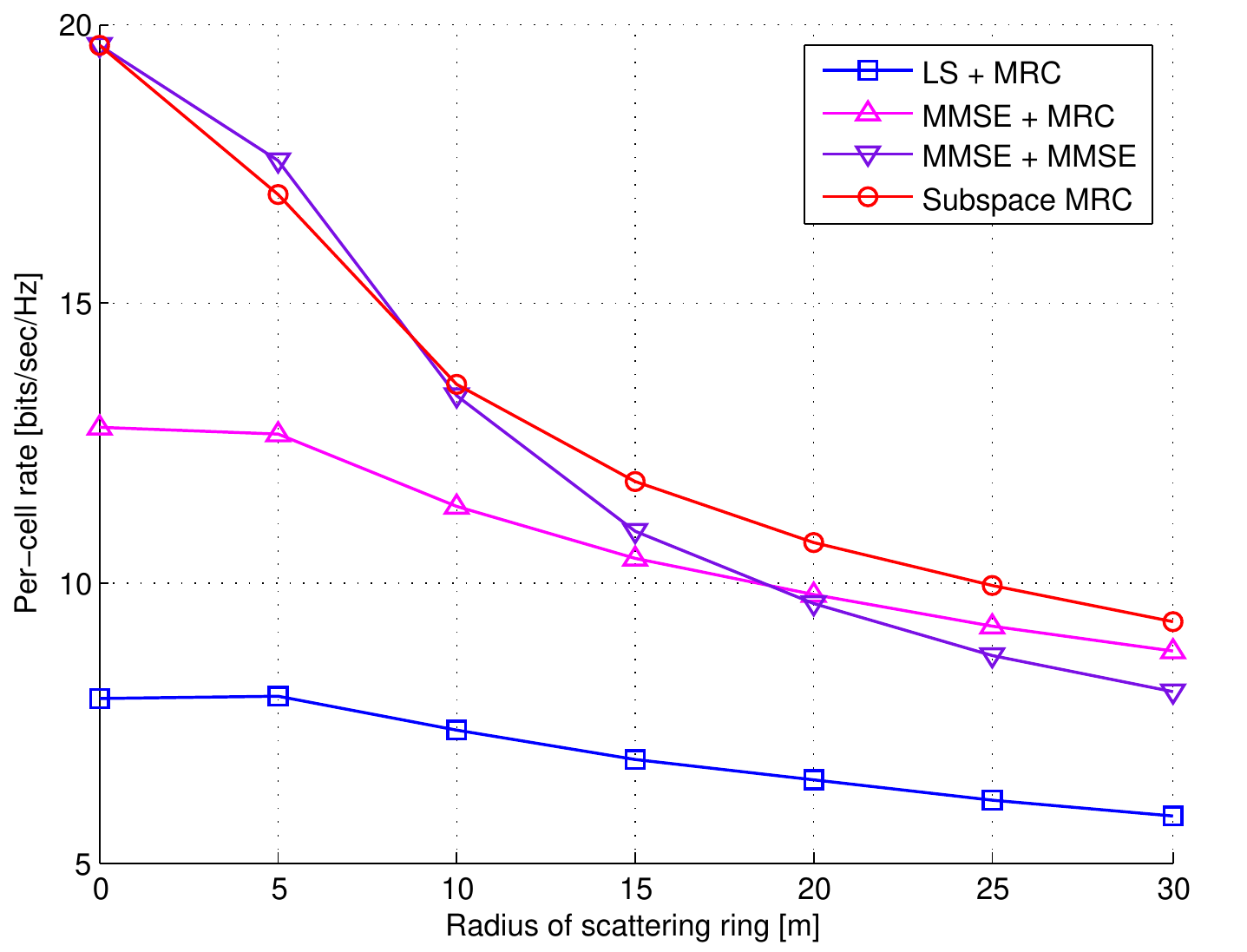}\\
  \caption{Uplink per-cell rate vs. $r$, cell-edge SNR 20dB, 7-cell network, each cell has $M=500$ distributed antennas.} \label{fig:Rate.vs.Radius}
\end{figure}
\section{Conclusions}\label{conclusion}
We investigate low-dimensional properties of covariance signal subspaces in general topologies of massive arrays. We extend previous results known in the uniform linear array case to the case of arrays with random placement, including the case of scattered antennas over a 2D dense network. A correlation model is derived which is exploited to gain insight on the interference rejection capability of low-complexity matched filter-based receivers in distributed antenna settings.

%\appendices
\appendix

%% JSTSP
\subsection{Proof of Proposition \ref{propDim_nonCali}:}\label{proof:propDim_nonCali}
\begin{proof}
We first consider an $N$-antenna ULA with aperture $\mathcal{D}$ and antenna spacing $D = \mathcal{D}/(N-1)$.
Define
\begin{align*}
\tilde{\boldsymbol{\alpha }}(x) & \triangleq \left[0, {{e^{ - j2\pi \frac{D}{\lambda }x}}}, \cdots,  {{e^{ - j2\pi \frac{D (N-1)}{\lambda }x}}}\right]^T \\
& = \left[0, {{e^{ - j2\pi \frac{\mathcal{D}}{\lambda (N-1)}x}}}, \cdots,  {{e^{ - j2\pi \frac{\mathcal{D} (N-1)}{\lambda (N-1)}x}}}\right]^T.
\end{align*}
Now we define $\tilde{\mathcal{A}} \triangleq \operatorname{span} \{ \tilde{\boldsymbol{\alpha }}{(x)}, {x} \in [{b_1},{b_2}]\}$.
Recall from \cite{yin2012jsac} the following result:

If ${\boldsymbol{\beta }}{\text{(x)}} \triangleq {[\begin{array}{*{20}{c}}
  1&{{e^{ - j\pi x}}}& \cdots &{{e^{ - j\pi (N - 1)x}}}
\end{array}]^T}$. Given ${b_1},{b_2} \in [ - 1,1]$ and ${b_1}<{b_2}$, define $\mathcal{A} \triangleq \operatorname{span} \{ {\boldsymbol{\beta }}{\text{(x)}}, {\text{x}} \in [{b_1},{b_2}]\}$, when $N$ is large,
\begin{equation}\label{Eq:JSAC_LemmaDim_1}
%\dim \{ \mathcal{A}\}  = {({b_2} - {b_1})N}/2 + o(1),
%\dim \{ \mathcal{A}\}  = {({b_2} - {b_1})N}/2 + o(\mathcal{D}),
\dim \{ \mathcal{A}\} = \frac{({b_2} - {b_1})N}{2} + o(\frac{({b_2} - {b_1})N}{2}).
\end{equation}
%or we can write
%\begin{equation}\label{Eq:JSAC_LemmaDim_2}
%\dim \{ \mathcal{A}\}  = {({b_2} - {b_1})N}/2 + o(1).
%\end{equation}

The above conclusion can directly apply: when $N$ is large,
\begin{align*}
\dim \{ \tilde{\mathcal{A}}\} &= \frac{N \mathcal{D}}{(N-1)\lambda}({b_2} - {b_1}) + o\left(\frac{N \mathcal{D}}{(N-1)\lambda}({b_2} - {b_1})\right) \\
&= \frac{{D} N}{\lambda}({b_2} - {b_1}) + o\left(\frac{{D} N}{\lambda}({b_2} - {b_1})\right) \\
&= \frac{M\overline{D}}{\lambda}({b_2} - {b_1}) + o\left(\frac{M\overline{D}}{\lambda}({b_2} - {b_1})\right) \\
&= \frac{M\overline{D}}{\lambda}({b_2} - {b_1}) + o(M).
\end{align*}

%We can also write
%$$\dim \{ \tilde{\mathcal{A}}\}  = \frac{{D} N}{\lambda}({b_2} - {b_1}) + o(1).$$
We can observe that $\dim \{ \tilde{\mathcal{A}}\}$ has no dependency on $N$. Imagine for any finite aperture $\mathcal{D}$, we let $N \rightarrow \infty$ so that $D \rightarrow 0$.
%the array aperture $\mathcal{D}$ is constant. When $N \rightarrow \infty$, the antenna spacing $D \rightarrow 0$, and $N D \rightarrow \mathcal{D} = M\overline{D}$, which means
%\begin{equation*}
%\mathop {\lim }\limits_{N \to \infty } \dim \{ \tilde{\mathcal{A}}\} = \frac{M\overline{D}}{\lambda}({b_2} - {b_1}) + o(\mathcal{D}).
%\end{equation*}
In this case, all elements of
${\boldsymbol{\alpha }}{(x)}$ can be seen as $M$ (finite) random samples in the vector $\tilde{\boldsymbol{\alpha }}(x) $. Hence
\begin{equation*}
\dim \{ \mathcal{B}\} \leq \dim \{ \tilde{\mathcal{A}}\} = \frac{M\overline{D}}{\lambda}({b_2} - {b_1}) + o(M).
\end{equation*}
Now consider the space $\mathcal{C}$. We define $ \mathcal{C}_q \triangleq  \operatorname{span} \{ {\boldsymbol{\alpha }}{(x)}, {x} \in [{b^{\text{min}}_q},{b^\text{max}_q}]\}$.
An upper bound of its dimension can be obtained by considering the extreme case when all $Q$ spaces are mutually orthogonal so that their dimensions can add up:
\begin{equation*}
\dim \{ \mathcal{C}\} \leq \sum_{q=1}^{Q}\dim\{{\mathcal{C}_q}\}= \sum_{q=1}^{Q}\frac{M\overline{D}}{\lambda}({b^\text{max}_q} - {b^\text{min}_q}) + o(M).
\end{equation*}
Thus, Proposition \ref{propDim_nonCali} is proven.
\end{proof}

\subsection{Proof of Theorem \ref{theoremRankDAS}:}\label{proof:theoremRankDAS}
\begin{proof}

%Now we drive an upper bound of the rank under the assumption of unit path loss.
For ease of exposition we omit the user index $k$. Imagine a special case when the scatterers are located in a line which has the length $\tilde{L}$, as shown in Fig. \ref{fig:Proof_Rank_UB}. Assume the antennas are far away so that the scatterers are in the same planar wavefront region. We denote the right end of the scattering line as the reference point. The $m$-th antenna is located $\tilde{d}_m$ meters away from the reference point, at the angle $\theta_m$. The $p$-th scatter is $\tilde{l}(p)$ meters away from the reference point. $\tilde{l}(p)$ follows a uniform distribution, i.e., $\tilde{l}(p) \sim \mathcal{U}(0, \tilde{L})$. The phase shift between the scatterer $p$ and the reference point is $2\pi \tilde{l}(p) \cos(\theta_m)/\lambda$, $1 \leq m \leq M$.
\begin{figure}[h]
  \centering
  \includegraphics[width=3.2in]{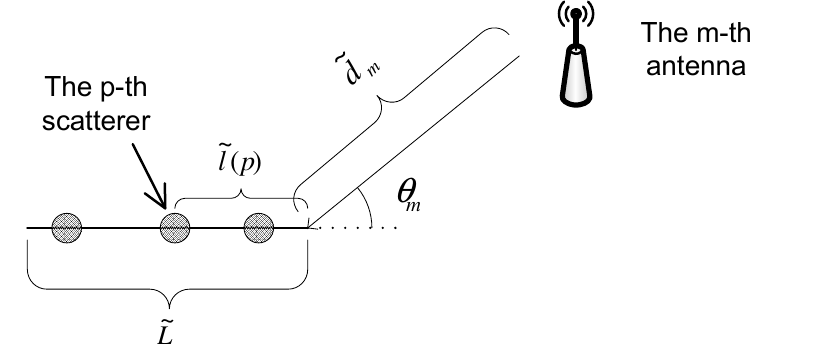}\\
  \caption{Illustration of a line of scatterers.} \label{fig:Proof_Rank_UB}
\end{figure}

Define a diagonal phase matrix
$$\mathbf{\Xi} \triangleq {\mathop{\rm diag}\nolimits} \{ {{e^{-j 2\pi \frac{\tilde{d}_1+r}{\lambda}},...,e^{-j 2\pi \frac{\tilde{d}_M+r}{\lambda}}}}\}.$$
The $p$-th scattering path vector channel is now given by:
\begin{equation*}
{\mathbf{h}_{p}} \triangleq  \left[ {\begin{array}{*{20}{c}}
  {e^{ - j2\pi \frac{d_{p1}+r}{\lambda }}} \\
   \vdots  \\
  {e^{ - j2\pi \frac{d_{pM}+r}{\lambda }}}
\end{array}} \right] e^{j\varphi_{p}} =
e^{j\varphi_{p}} \mathbf{\Xi} \left[ {\begin{array}{*{20}{c}}
  {e^{ - j2\pi \frac{\tilde{l}(p)\cos(\theta_1)}{\lambda }}} \\
   \vdots  \\
  {e^{ - j2\pi \frac{\tilde{l}(p)\cos(\theta_M)}{\lambda }}}
\end{array}} \right]
\end{equation*}
If we define $\tilde{\mathbf{h}}_p \triangleq [{e^{ - j2\pi \frac{\tilde{l}(p)\cos(\theta_1)}{\lambda }}}, \cdots, {e^{ - j2\pi \frac{\tilde{l}(p)\cos(\theta_M)}{\lambda }}}]^T$, we may see that $\tilde{\mathbf{h}}_p$ and $\mathbf{h}_{p}$ are unitarily equivalent, since $e^{j\varphi_{p}} \mathbf{\Xi}$ forms a unitary matrix. If we vary $\tilde{l}(p)$, we can obtain two linear spaces spanned by $\mathbf{h}_{p}$ and/or $\tilde{\mathbf{h}}_p$. The dimensions of the two spaces are equal. Therefore in the following we will find the dimension spanned by $\tilde{\mathbf{h}}_p$ instead of $\mathbf{h}_{p}$.
Assume $N$ is a large integer, we define $x_n \triangleq -1 + \frac{2(n-1)}{N}, n = 1,..., N$, the set $\mathcal{X} \triangleq \{x_n\}$, as well as the vector
\begin{equation*}
\boldsymbol{\mu}_p\triangleq
\left[ {\begin{array}{*{20}{c}}
  {e^{ - j2\pi \frac{\tilde{l}(p)x_1}{\lambda }}} \\
  {e^{ - j2\pi \frac{\tilde{l}(p)x_2}{\lambda }}} \\
   \vdots  \\
  {e^{ - j2\pi \frac{\tilde{l}(p)x_N}{\lambda }}}
\end{array}} \right]
= e^{j2\pi \tilde{l}(p) }\left[ {\begin{array}{*{20}{c}}
  {e^{ - j2\pi \frac{\tilde{l}(p)(0)}{\lambda }}} \\
  {e^{ - j2\pi \frac{\tilde{l}(p)(\frac{2}{N})}{\lambda }}} \\
   \vdots  \\
  {e^{ - j2\pi \frac{\tilde{l}(p)(\frac{2(N-1)}{N})}{\lambda }}}
\end{array}} \right].
\end{equation*}

%Recall from \cite{yin2012jsac} the following result:
%
%Define ${\boldsymbol{\alpha }}{\text{(x)}} \triangleq {[\begin{array}{*{20}{c}}
%  1&{{e^{ - j\pi x}}}& \cdots &{{e^{ - j\pi (N - 1)x}}}
%\end{array}]^T}$. Given ${b_1},{b_2} \in [ - 1,1]$ and ${b_1}<{b_2}$, define $\mathcal{B} \triangleq \operatorname{span} \{ {\boldsymbol{\alpha }}{\text{(x)}}, {\text{x}} \in [{b_1},{b_2}]\}$, then
% $$\dim \{ \mathcal{B}\}  = {({b_2} - {b_1})N}/2 + o(1), $$ when $N$ is large.

Since $\tilde{l}(p) \sim \mathcal{U}(0, \tilde{L})$, we reuse the result shown in (\ref{Eq:JSAC_LemmaDim_1}):
\begin{align*}
\quad& \dim \{ \operatorname{span} \{ \boldsymbol{\mu}_p, \tilde{l}(p) \in [0, \tilde{L}]\} \} \\
 &= \frac{4\tilde{L}}{N\lambda} \frac{N}{2} + o\left(\frac{4\tilde{L}}{N\lambda} \frac{N}{2}\right) \\
 &= \frac{2\tilde{L}}{\lambda} + o(\tilde{L}).
\end{align*}

Note that the above result holds when $N$ is arbitrarily large. We again observe that when $N \rightarrow \infty$, any $\cos(\theta_m)$ will fall into the set $\mathcal{X}$, which indicates
\begin{equation}\label{Eq:DimSpanHp}
\dim \{ \operatorname{span} \{ \tilde{\mathbf{h}}_p, \tilde{l}(p) \in [0, \tilde{L}]\} \} \leq \frac{2\tilde{L}}{\lambda} + o(\tilde{L}).
\end{equation}
Recall that the covariance matrix $\mathbf{R}= \mathbb{E}\{\frac{1}{{P}}\sum\limits_{p = 1}^P  \sum\limits_{q = 1}^P {{{\mathbf{h}}_{p}}}{{{\mathbf{h}}_{q}^H}}\}$.
% the channel model ${\mathbf{h}} \triangleq \frac{1}{\sqrt{P}}\sum\limits_{p = 1}^P {{{\mathbf{h}}_{p}}}$, and
%the covariance matrix
%\begin{equation}
%\mathbf{R}=\mathbb{E}\{\mathbf{h} \mathbf{h}^H\}= \mathbb{E}\{\frac{1}{{P}}\sum\limits_{p = 1}^P  \sum\limits_{q = 1}^P {{{\mathbf{h}}_{p}}}{{{\mathbf{h}}_{q}^H}}\},
%\end{equation}
%where the expectation is taken over the random positions of all scatterers.
Because of the random and independent phases $\varphi _p$ and $\varphi _q$,% we have
$$\forall p \neq q, \mathbb{E}\{{\mathbf{h}}_{p}{{\mathbf{h}}_{q}^H}\} = \mathbf{0}.$$
%Therefore
\begin{equation*}
\mathbf{R} = \mathbb{E}\{\frac{1}{{P}}\sum\limits_{p = 1}^P  {{{\mathbf{h}}_{p}}}{{{\mathbf{h}}_{p}^H}}\} = \mathbb{E}\{ {\mathbf{h}}_{p}{{\mathbf{h}_{p}^H}}\}.
\end{equation*}
We can see that the number of scatterers has no impact on the rank of channel covariance matrix. Hence according to (\ref{Eq:DimSpanHp}), the rank of $\mathbf{R}$ is upper bounded by
$$\text{rank}(\mathbf{R}) \leq \frac{2\tilde{L}}{\lambda} + o(\tilde{L}).$$
Returning to the one-ring model, %where the scatterers are located in a ring, which can be seen as many lines.
we can interpret the ring as the sum of lines, with the total length $2 \pi r$. %the circumference of the ring.
%The total length of these lines is the circumference of the scatterer ring.
An extreme case is when all of the channels corresponding to different pieces of the ring span orthogonal spaces, i.e., the rank of the covariance matrix is the sum of the spatial dimensions
%spanned by the channels
corresponding to every pieces of the ring. This is the case when the covariance rank is maximized. Therefore the rank is upper bounded by:
\begin{equation*}
\text{rank}(\mathbf{R}) \leq \frac{4\pi r}{\lambda} + o(r).
\end{equation*}
Thus Theorem \ref{theoremRankDAS} is proven.
%Since which proves Theorem \ref{theoremRankDAS}.
%Combining (\ref{Eq:RankDAS_LB}) and (\ref{Eq:RankDAS_UB}), we may obtain:
%$$\mathop {\lim }\limits_{M \to \infty } {\text{rank}({\mathbf{R}})} = \frac{4\pi r}{\lambda}+ o(1),$$
%which proves Theorem \ref{theoremRankDAS}.
\end{proof}

\subsection{Proof of Proposition \ref{propBessel}\label{proof:propBessel}}
\begin{proof}
\quad \emph{Proof:} We split up the disk centered in the midpoint between the two scatterers into $N$ (a large number) equal-sized sectors like dividing up a cake. The BS antennas all fall into one of these sectors. We have
\begin{align*}
\left| {\mathbf{h}}_{2q}^H{{\mathbf{h}}_{1p}} \right| &= \mathop {\lim }\limits_{N \to \infty } \left| \sum\limits_{n = 1}^N {\sum\limits_{m = 1}^{{M_n}} {\left( {h_{2qnm}^*{h_{1pnm}}} \right)} } \right| \\
&= \mathop {\lim }\limits_{N \to \infty } \left| \sum\limits_{n = 1}^N {\sum\limits_{m = 1}^{{M_n}} {\left( {\sqrt{{\beta _{2qnm}}{\beta _{1pnm}}}{e^{j2\pi \frac{{{d_{2qnm}} - {d_{1pnm}}}}{\lambda }}}} \right)} }\right| ,
\end{align*}
where $M_n$ (could be zero) is the total number of antennas located in the $n$-th sector, $h_{2qnm}$ and $h_{1pnm}$ are the channel coefficient between the $m$-th antenna located in the $n$-th sector and the two scatterers. If $N$ is large, the angle contained by the two sides of a sector is small. Therefore BS antennas located in the same sector share the same difference of distances between the two scatterers, i.e., ${d_{2qnm}} -{d_{1pnm}}$ only depends on the sector index $n$.
Based on the fact that $D_{pq}$ is small, we assume ${\beta _{2qnm}} \approx {\beta _{1pnm}}$ and ${d_{2qnm}} -{d_{1pnm}} = D_{pq} \cos(\alpha_n)$, where $\alpha_n \triangleq \frac{2\pi n}{N}$ is the AOA from the BS antennas in the $n$-th sector. %We may write
\begin{align*}
|{\mathbf{h}}_{2q}^H{{\mathbf{h}}_{1p}}| &= \mathop {\lim }\limits_{N \to \infty } \left| \sum\limits_{n = 1}^N {\left( {{e^{j2\pi \frac{{D_{pq} \cos ({\alpha _n})}}{\lambda }}}\sum\limits_{m = 1}^{{M_n}} { \sqrt{{{\beta _{2qnm}}{\beta _{1pnm}}}} } } \right)} \right| \\
&\approx \mathop {\lim }\limits_{N \to \infty } \left| \sum\limits_{n = 1}^N {\left( {{e^{j2\pi \frac{{D_{pq} \cos ({\alpha _n})}}{\lambda }}}\sum\limits_{m = 1}^{{M_n}} { {\beta _{1pnm}} } } \right)} \right|,
\end{align*}
%and
$$\left| {{{\mathbf{h}}_{1p}}} \right|\left| {{{\mathbf{h}}_{2q}}} \right| \approx {\left| {{{\mathbf{h}}_{1p}}} \right|^2} = \mathop {\lim }\limits_{N \to \infty } \sum\limits_{n = 1}^N {\sum\limits_{m = 1}^{{M_n}} {\beta _{1pnm}} }.$$
%Define the function $ d(n) \triangleq {d_{2qnm}} -{d_{1pnm}}$ Thus we may write:
Due to the symmetry of the network, when $M \rightarrow \infty$, the radio waves can arrive from any direction with equal probability. % and equal power.
Thus $\sum\nolimits_{m = 1}^{{M_n}} {\beta _{1pnm}}$ is independent of sector index $n$. %We can further write
\begin{align*}
\mathop {\lim }\limits_{M \to \infty } \frac{|{\mathbf{h}}_{2q}^H{{\mathbf{h}}_{1p}}|}{{\left| {{{\mathbf{h}}_{1p}}} \right|\left| {{{\mathbf{h}}_{2q}}} \right|}} &\approx \mathop {\lim }\limits_{N \to \infty } \left|\frac{1}{N}\sum\limits_{n = 1}^N {{e^{j2\pi \frac{{{D_{pq}}\cos ({\alpha _n})}}{\lambda }}}} \right|  \\
&= \frac{1}{{2\pi }}\left| \int_0^{2\pi } {{e^{j2\pi \frac{{D_{pq} \cos (\alpha )}}{\lambda }}}} d\alpha \right|\\
&= \left| {J_0}(\frac{{2\pi D_{pq}}}{\lambda })\right|,
\end{align*}
and Proposition \ref{propBessel} is proven.
\end{proof}
% you can choose not to have a title for an appendix
% if you want by leaving the argument blank
%\section{Appendix A}

% use section* for acknowledgement
%\section*{Acknowledgment}
%Discussions with Laura Cottatellucci are gratefully acknowledged.

% Can use something like this to put references on a page
% by themselves when using endfloat and the captionsoff option.
\ifCLASSOPTIONcaptionsoff
  \newpage
\fi

%
%% if you will not have a photo at all:
%\begin{IEEEbiographynophoto}{Xxx Xxx}
%Biography text here.
%\end{IEEEbiographynophoto}

% insert where needed to balance the two columns on the last page with
% biographies
%\newpage
%
%\begin{IEEEbiographynophoto}{Jane Doe}
%Biography text here.
%\end{IEEEbiographynophoto}

\bibliography{bib/allCitations}
\bibliographystyle{IEEEtran}

\begin{IEEEbiography}%[{\includegraphics[width=1in,height=1.25in,clip,keepaspectratio]{Photo_HaifanYin.jpg}}]
{Haifan Yin}
received the B.Sc. degree in Electrical and Electronic Engineering and the M.Sc. degree in Electronics and Information Engineering from Huazhong University of Science and Technology, Wuhan, China, in 2009 and 2012 respectively. From 2009 to 2011, he had been with Wuhan National Laboratory for Optoelectronics, China, working on the implementation of TD-LTE systems as an R\&D engineer. In September 2012, he joined the Mobile Communications Department at EURECOM, France, where he is now a Ph.D. student. His current research interests include channel estimation, multi-cell cooperative networks, and large-scale antenna systems.
\end{IEEEbiography}

\vfill

\begin{IEEEbiography}%[{\includegraphics[width=1in,height=1.25in,clip,keepaspectratio]{Photo_DavidGesbert.jpg}}]
{David Gesbert}(IEEE Fellow) is Professor and Head of the Mobile Communications Department, EURECOM, France.  He obtained the Ph.D degree from Ecole Nationale Superieure des Telecommunications, France, in 1997. From 1997 to 1999 he has been with the Information Systems Laboratory, Stanford University. In 1999, he was a founding engineer of Iospan Wireless Inc, San Jose, Ca.,a startup company pioneering MIMO-OFDM (now Intel). Between 2001 and 2003 he has been with the Department of Informatics, University of Oslo as an adjunct professor. D. Gesbert has published about 200 papers and several patents all in the area of signal processing, communications, and wireless networks.

D. Gesbert was a co-editor of several special issues on wireless networks and communications theory, for JSAC (2003, 2007, 2009), EURASIP Journal on Applied Signal Processing (2004, 2007), Wireless Communications Magazine (2006). He served on the IEEE Signal Processing for Communications Technical Committee, 2003-2008. He authored or co-authored papers winning the 2004 IEEE Best Tutorial Paper Award (Communications Society) for a 2003 JSAC paper on MIMO systems, 2005 Best Paper (Young Author) Award for Signal Proc. Society journals, and the Best Paper Award for the 2004 ACM MSWiM workshop. He co-authored the book ``Space time wireless communications: From parameter estimation to MIMO systems", Cambridge Press, 2006.
\end{IEEEbiography}

\begin{IEEEbiography}%[{\includegraphics[width=1in,height=1.25in,clip,keepaspectratio]{photo_LauraCottatellucci.jpg}}]
{Laura Cottatellucci} is currently assistant professor at the Dept. of Mobile Communications in Eurecom. She obtained the PhD from Technical University of Vienna, Austria (2006). Specialized in networking at Guglielmo Reiss Romoli School (1996, Italy), she worked in Telecom Italia (1995-2000) as responsible of industrial projects. From April 2000 to September 2005 she worked as senior research in ftw Austria on CDMA and MIMO systems. From October to December 2005 she was research fellow on ad-hoc networks in INRIA (Sophia Antipolis, France) and guest researcher in Eurecom. In 2006 she was appointed research fellow at the University of South Australia, Australia to work on information theory for networks with uncertain topology. Cottatellucci is co-editor of a special issue on cooperative communications for EURASIP Journal on Wireless Communications and Networking and co-chair of RAWNET/WCN3 2009 in WiOpt.  Her research topics of interest are large system analysis of wireless and complex networks based on random matrix theory and game theory.
\end{IEEEbiography}
\vfill

\end{document}